\newcolumntype{P}[1]{>{\centering\arraybackslash}p{#1}}
\newcolumntype{M}[1]{>{\centering\arraybackslash}m{#1}}
\newtheorem{proposition}{Proposition} 
\Crefname{equation}{Eq.\!}{Eqs.\!}
\Crefname{figure}{Fig.\!}{Figs.\!}
\Crefname{tabular}{Tab.\!}{Tabs.\!}
\Crefname{section}{Section\!}{Sections.\!}
\def\nb0{{\mathbf{0}}}
\def\nb1{{\mathbf{1}}}
\newtheorem{lemma}{Lemma}
\newtheorem{definition}{Definition}
\newtheorem{theorem}{Theorem}
\newenvironment{sequation}{
\begin{equation}\small}{\end{equation}
}
\begin{document}
\graphicspath{{./Figures/}}
	\begin{acronym}

\acro{5G-NR}{5G New Radio}
\acro{3GPP}{3rd Generation Partnership Project}
\acro{ABS}{aerial base station}
\acro{AC}{address coding}
\acro{ACF}{autocorrelation function}
\acro{ACR}{autocorrelation receiver}
\acro{ADC}{analog-to-digital converter}
\acrodef{aic}[AIC]{Analog-to-Information Converter}     
\acro{AIC}[AIC]{Akaike information criterion}
\acro{aric}[ARIC]{asymmetric restricted isometry constant}
\acro{arip}[ARIP]{asymmetric restricted isometry property}

\acro{ARQ}{Automatic Repeat Request}
\acro{AUB}{asymptotic union bound}
\acrodef{awgn}[AWGN]{Additive White Gaussian Noise}     
\acro{AWGN}{additive white Gaussian noise}

\acro{APSK}[PSK]{asymmetric PSK} 

\acro{waric}[AWRICs]{asymmetric weak restricted isometry constants}
\acro{warip}[AWRIP]{asymmetric weak restricted isometry property}
\acro{BCH}{Bose, Chaudhuri, and Hocquenghem}        
\acro{BCHC}[BCHSC]{BCH based source coding}
\acro{BEP}{bit error probability}
\acro{BFC}{block fading channel}
\acro{BG}[BG]{Bernoulli-Gaussian}
\acro{BGG}{Bernoulli-Generalized Gaussian}
\acro{BPAM}{binary pulse amplitude modulation}
\acro{BPDN}{Basis Pursuit Denoising}
\acro{BPPM}{binary pulse position modulation}
\acro{BPSK}{Binary Phase Shift Keying}
\acro{BPZF}{bandpass zonal filter}
\acro{BSC}{binary symmetric channels}              
\acro{BU}[BU]{Bernoulli-uniform}
\acro{BER}{bit error rate}
\acro{BS}{base station}
\acro{BW}{BandWidth}
\acro{BLLL}{ binary log-linear learning }

\acro{CP}{Cyclic Prefix}
\acrodef{cdf}[CDF]{cumulative distribution function}   
\acro{CDF}{Cumulative Distribution Function}
\acrodef{c.d.f.}[CDF]{cumulative distribution function}
\acro{CCDF}{complementary cumulative distribution function}
\acrodef{ccdf}[CCDF]{complementary CDF}               
\acrodef{c.c.d.f.}[CCDF]{complementary cumulative distribution function}
\acro{CD}{cooperative diversity}

\acro{CDMA}{Code Division Multiple Access}
\acro{ch.f.}{characteristic function}
\acro{CIR}{channel impulse response}
\acro{cosamp}[CoSaMP]{compressive sampling matching pursuit}
\acro{CR}{cognitive radio}
\acro{cs}[CS]{compressed sensing}                   
\acrodef{cscapital}[CS]{Compressed sensing} 
\acrodef{CS}[CS]{compressed sensing}
\acro{CSI}{channel state information}
\acro{CCSDS}{consultative committee for space data systems}
\acro{CC}{convolutional coding}
\acro{Covid19}[COVID-19]{Coronavirus disease}

\acro{DAA}{detect and avoid}
\acro{DAB}{digital audio broadcasting}
\acro{DCT}{discrete cosine transform}
\acro{dft}[DFT]{discrete Fourier transform}
\acro{DR}{distortion-rate}
\acro{DS}{direct sequence}
\acro{DS-SS}{direct-sequence spread-spectrum}
\acro{DTR}{differential transmitted-reference}
\acro{DVB-H}{digital video broadcasting\,--\,handheld}
\acro{DVB-T}{digital video broadcasting\,--\,terrestrial}
\acro{DL}{DownLink}
\acro{DSSS}{Direct Sequence Spread Spectrum}
\acro{DFT-s-OFDM}{Discrete Fourier Transform-spread-Orthogonal Frequency Division Multiplexing}
\acro{DAS}{Distributed Antenna System}
\acro{DNA}{DeoxyriboNucleic Acid}

\acro{EC}{European Commission}
\acro{EED}[EED]{exact eigenvalues distribution}
\acro{EIRP}{Equivalent Isotropically Radiated Power}
\acro{ELP}{equivalent low-pass}
\acro{eMBB}{Enhanced Mobile Broadband}
\acro{EMF}{ElectroMagnetic Field}
\acro{EU}{European union}
\acro{EI}{Exposure Index}
\acro{eICIC}{enhanced Inter-Cell Interference Coordination}

\acro{FC}[FC]{fusion center}
\acro{FCC}{Federal Communications Commission}
\acro{FEC}{forward error correction}
\acro{FFT}{fast Fourier transform}
\acro{FH}{frequency-hopping}
\acro{FH-SS}{frequency-hopping spread-spectrum}
\acrodef{FS}{Frame synchronization}
\acro{FSsmall}[FS]{frame synchronization}  
\acro{FDMA}{Frequency Division Multiple Access}

\acro{GA}{Gaussian approximation}
\acro{GF}{Galois field }
\acro{GG}{Generalized-Gaussian}
\acro{GIC}[GIC]{generalized information criterion}
\acro{GLRT}{generalized likelihood ratio test}
\acro{GPS}{Global Positioning System}
\acro{GMSK}{Gaussian Minimum Shift Keying}
\acro{GSMA}{Global System for Mobile communications Association}
\acro{GS}{ground station}
\acro{GMG}{ Grid-connected MicroGeneration}

\acro{HAP}{high altitude platform}
\acro{HetNet}{Heterogeneous network}

\acro{IDR}{information distortion-rate}
\acro{IFFT}{inverse fast Fourier transform}
\acro{iht}[IHT]{iterative hard thresholding}
\acro{i.i.d.}{independent, identically distributed}
\acro{IoT}{Internet of Things}                      
\acro{IR}{impulse radio}
\acro{lric}[LRIC]{lower restricted isometry constant}
\acro{lrict}[LRICt]{lower restricted isometry constant threshold}
\acro{ISI}{intersymbol interference}
\acro{ITU}{International Telecommunication Union}
\acro{ICNIRP}{International Commission on Non-Ionizing Radiation Protection}
\acro{IEEE}{Institute of Electrical and Electronics Engineers}
\acro{ICES}{IEEE international committee on electromagnetic safety}
\acro{IEC}{International Electrotechnical Commission}
\acro{IARC}{International Agency on Research on Cancer}
\acro{IS-95}{Interim Standard 95}

\acro{KPI}{Key Performance Indicator}

\acro{LEO}{low earth orbit}
\acro{LF}{likelihood function}
\acro{LLF}{log-likelihood function}
\acro{LLR}{log-likelihood ratio}
\acro{LLRT}{log-likelihood ratio test}
\acro{LoS}{Line-of-Sight}
\acro{LRT}{likelihood ratio test}
\acro{wlric}[LWRIC]{lower weak restricted isometry constant}
\acro{wlrict}[LWRICt]{LWRIC threshold}
\acro{LPWAN}{Low Power Wide Area Network}
\acro{LoRaWAN}{Low power long Range Wide Area Network}
\acro{NLoS}{Non-Line-of-Sight}
\acro{LiFi}[Li-Fi]{light-fidelity}
 \acro{LED}{light emitting diode}
 \acro{LABS}{LoS transmission with each ABS}
 \acro{NLABS}{NLoS transmission with each ABS}

\acro{MB}{multiband}
\acro{MC}{macro cell}
\acro{MDS}{mixed distributed source}
\acro{MF}{matched filter}
\acro{m.g.f.}{moment generating function}
\acro{MI}{mutual information}
\acro{MIMO}{Multiple-Input Multiple-Output}
\acro{MISO}{multiple-input single-output}
\acrodef{maxs}[MJSO]{maximum joint support cardinality}                       
\acro{ML}[ML]{maximum likelihood}
\acro{MMSE}{minimum mean-square error}
\acro{MMV}{multiple measurement vectors}
\acrodef{MOS}{model order selection}
\acro{M-PSK}[${M}$-PSK]{$M$-ary phase shift keying}                       
\acro{M-APSK}[${M}$-PSK]{$M$-ary asymmetric PSK} 
\acro{MP}{ multi-period}
\acro{MINLP}{mixed integer non-linear programming}

\acro{M-QAM}[$M$-QAM]{$M$-ary quadrature amplitude modulation}
\acro{MRC}{maximal ratio combiner}                  
\acro{maxs}[MSO]{maximum sparsity order}                                      
\acro{M2M}{Machine-to-Machine}                                                
\acro{MUI}{multi-user interference}
\acro{mMTC}{massive Machine Type Communications}      
\acro{mm-Wave}{millimeter-wave}
\acro{MP}{mobile phone}
\acro{MPE}{maximum permissible exposure}
\acro{MAC}{media access control}
\acro{NB}{narrowband}
\acro{NBI}{narrowband interference}
\acro{NLA}{nonlinear sparse approximation}
\acro{NLOS}{Non-Line of Sight}
\acro{NTIA}{National Telecommunications and Information Administration}
\acro{NTP}{National Toxicology Program}
\acro{NHS}{National Health Service}

\acro{LOS}{Line of Sight}

\acro{OC}{optimum combining}                             
\acro{OC}{optimum combining}
\acro{ODE}{operational distortion-energy}
\acro{ODR}{operational distortion-rate}
\acro{OFDM}{Orthogonal Frequency-Division Multiplexing}
\acro{omp}[OMP]{orthogonal matching pursuit}
\acro{OSMP}[OSMP]{orthogonal subspace matching pursuit}
\acro{OQAM}{offset quadrature amplitude modulation}
\acro{OQPSK}{offset QPSK}
\acro{OFDMA}{Orthogonal Frequency-division Multiple Access}
\acro{OPEX}{Operating Expenditures}
\acro{OQPSK/PM}{OQPSK with phase modulation}

\acro{PAM}{pulse amplitude modulation}
\acro{PAR}{peak-to-average ratio}
\acrodef{pdf}[PDF]{probability density function}                      
\acro{PDF}{probability density function}
\acrodef{p.d.f.}[PDF]{probability distribution function}
\acro{PDP}{power dispersion profile}
\acro{PMF}{probability mass function}                             
\acrodef{p.m.f.}[PMF]{probability mass function}
\acro{PN}{pseudo-noise}
\acro{PPM}{pulse position modulation}
\acro{PRake}{Partial Rake}
\acro{PSD}{power spectral density}
\acro{PSEP}{pairwise synchronization error probability}
\acro{PSK}{phase shift keying}
\acro{PD}{power density}
\acro{8-PSK}[$8$-PSK]{$8$-phase shift keying}
\acro{PPP}{Poisson point process}
\acro{PCP}{Poisson cluster process}
 
\acro{FSK}{Frequency Shift Keying}

\acro{QAM}{Quadrature Amplitude Modulation}
\acro{QPSK}{Quadrature Phase Shift Keying}
\acro{OQPSK/PM}{OQPSK with phase modulator }

\acro{RD}[RD]{raw data}
\acro{RDL}{"random data limit"}
\acro{ric}[RIC]{restricted isometry constant}
\acro{rict}[RICt]{restricted isometry constant threshold}
\acro{rip}[RIP]{restricted isometry property}
\acro{ROC}{receiver operating characteristic}
\acro{rq}[RQ]{Raleigh quotient}
\acro{RS}[RS]{Reed-Solomon}
\acro{RSC}[RSSC]{RS based source coding}
\acro{r.v.}{random variable}                               
\acro{R.V.}{random vector}
\acro{RMS}{root mean square}
\acro{RFR}{radiofrequency radiation}
\acro{RIS}{Reconfigurable Intelligent Surface}
\acro{RNA}{RiboNucleic Acid}
\acro{RRM}{Radio Resource Management}
\acro{RUE}{reference user equipments}
\acro{RAT}{radio access technology}
\acro{RB}{resource block}

\acro{SA}[SA-Music]{subspace-augmented MUSIC with OSMP}
\acro{SC}{small cell}
\acro{SCBSES}[SCBSES]{Source Compression Based Syndrome Encoding Scheme}
\acro{SCM}{sample covariance matrix}
\acro{SEP}{symbol error probability}
\acro{SG}[SG]{sparse-land Gaussian model}
\acro{SIMO}{single-input multiple-output}
\acro{SINR}{signal-to-interference plus noise ratio}
\acro{SIR}{signal-to-interference ratio}
\acro{SISO}{Single-Input Single-Output}
\acro{SMV}{single measurement vector}
\acro{SNR}[\textrm{SNR}]{signal-to-noise ratio} 
\acro{sp}[SP]{subspace pursuit}
\acro{SS}{spread spectrum}
\acro{SW}{sync word}
\acro{SAR}{specific absorption rate}
\acro{SSB}{synchronization signal block}
\acro{SR}{shrink and realign}

\acro{tUAV}{tethered Unmanned Aerial Vehicle}
\acro{TBS}{terrestrial base station}

\acro{uUAV}{untethered Unmanned Aerial Vehicle}
\acro{PDF}{probability density functions}

\acro{PL}{path-loss}

\acro{TH}{time-hopping}
\acro{ToA}{time-of-arrival}
\acro{TR}{transmitted-reference}
\acro{TW}{Tracy-Widom}
\acro{TWDT}{TW Distribution Tail}
\acro{TCM}{trellis coded modulation}
\acro{TDD}{Time-Division Duplexing}
\acro{TDMA}{Time Division Multiple Access}
\acro{Tx}{average transmit}

\acro{UAV}{Unmanned Aerial Vehicle}
\acro{uric}[URIC]{upper restricted isometry constant}
\acro{urict}[URICt]{upper restricted isometry constant threshold}
\acro{UWB}{ultrawide band}
\acro{UWBcap}[UWB]{Ultrawide band}   
\acro{URLLC}{Ultra Reliable Low Latency Communications}
         
\acro{wuric}[UWRIC]{upper weak restricted isometry constant}
\acro{wurict}[UWRICt]{UWRIC threshold}                
\acro{UE}{User Equipment}
\acro{UL}{UpLink}

\acro{WiM}[WiM]{weigh-in-motion}
\acro{WLAN}{wireless local area network}
\acro{wm}[WM]{Wishart matrix}                               
\acroplural{wm}[WM]{Wishart matrices}
\acro{WMAN}{wireless metropolitan area network}
\acro{WPAN}{wireless personal area network}
\acro{wric}[WRIC]{weak restricted isometry constant}
\acro{wrict}[WRICt]{weak restricted isometry constant thresholds}
\acro{wrip}[WRIP]{weak restricted isometry property}
\acro{WSN}{wireless sensor network}                        
\acro{WSS}{Wide-Sense Stationary}
\acro{WHO}{World Health Organization}
\acro{Wi-Fi}{Wireless Fidelity}

\acro{sss}[SpaSoSEnc]{sparse source syndrome encoding}

\acro{VLC}{Visible Light Communication}
\acro{VPN}{Virtual Private Network} 
\acro{RF}{Radio Frequency}
\acro{FSO}{Free Space Optics}
\acro{IoST}{Internet of Space Things}

\acro{GSM}{Global System for Mobile Communications}
\acro{2G}{Second-generation cellular network}
\acro{3G}{Third-generation cellular network}
\acro{4G}{Fourth-generation cellular network}
\acro{5G}{Fifth-generation cellular network}	
\acro{gNB}{next-generation Node-B Base Station}
\acro{NR}{New Radio}
\acro{UMTS}{Universal Mobile Telecommunications Service}
\acro{LTE}{Long Term Evolution}

\acro{QoS}{Quality of Service}
\end{acronym}
	
\newcommand{\SAR} {\mathrm{SAR}}
\newcommand{\WBSAR} {\mathrm{SAR}_{\mathsf{WB}}}
\newcommand{\gSAR} {\mathrm{SAR}_{10\si{\gram}}}
\newcommand{\Sab} {S_{\mathsf{ab}}}
\newcommand{\Eavg} {E_{\mathsf{avg}}}
\newcommand{\ft}{f_{\textsf{th}}}
\newcommand{\alphatf}{\alpha_{24}}

\title{
Stochastic Geometry-Based Low Latency Routing in Massive LEO Satellite Networks
}
\author{
Ruibo Wang, Mustafa A. Kishk, {\em Member, IEEE} and Mohamed-Slim Alouini, {\em Fellow, IEEE}
\vspace{-4mm}
}
\maketitle

\begin{abstract}
In this paper, the routing in massive low earth orbit (LEO) satellite networks is studied. When the satellite-to-satellite communication distance is limited, we choose different relay satellites to minimize the latency in a constellation at a constant altitude. Firstly, the global optimum solution is obtained in the ideal scenario when there are available satellites at all the ideal locations. Next, we propose a nearest neighbor search algorithm for realistic (non-ideal) scenarios with a limited number of satellites. The proposed algorithm can approach the global optimum solution under an ideal scenario through a finite number of iterations and a tiny range of searches. Compared with other routing strategies, the proposed algorithm shows significant advantages in terms of latency. Furthermore, we provide two approximation techniques that can give tight lower and upper bounds for the latency of the proposed algorithm, respectively. Finally, the relationships between latency and constellation height, satellites' number, and communication distance are investigated.
\end{abstract}

\begin{IEEEkeywords}
Latency, routing, stochastic geometry, massive LEO satellite constellation, satellite to satellite communication, optimization.
\end{IEEEkeywords}

\section{Introduction}
In recent years, we have witnessed the booming development of low earth orbit (LEO) satellite networks. Companies such as SpaceX, Amazon, and OneWeb are accelerating the formation of a network of tens of thousands of LEO satellites \cite{del2019technical}. Since LEO satellite communication has relatively low latency and unique ability to provide seamless global coverage \cite{kodheli2020satellite}, \cite{yaacoub2020key}, part of real-time communication services are being delivered from ground to space \cite{c2014system}. In terms of low latency and ultra-long distance communication, the LEO satellite network has excellent advantages over ground networks and high orbit satellite networks \cite{chaudhry2020free}. In ultra-long distance communication, multiple satellites are used as relays to complete multi-hop routing. How to select the relay satellite to achieve the minimum latency routing becomes one of the challenges \cite{tang2018multipath}, \cite{he2016delay}. 
\par
Different from the traditional planar routing, satellites are distributed on a closed sphere, and the maximum distance between satellites is limited due to earth blockage \cite{al2021analytic}. For a network where the number and location of satellites are constantly changing, it is more challenging to implement routing in the time-varying topology than in the traditional static topology \cite{sun2020routing}. For small LEO satellites, both computing and storage capacity are limited \cite{zhang2021service}. In a massive LEO satellite network, frequent position changes lead to high computational costs. In addition, each satellite collects only the current state of its neighbors in most cases, which means that it is highly demanding for a single satellite to obtain and store global information such as the location of the satellite. However, using only local information can only get the approximate shortest path, which has limited improvement on the whole constellation latency performance \cite{li2019temporal}.
\par
Existing routing schemes provide strategies to address some of the challenges, but they are not suitable for dynamic large-scale satellite constellations. Stochastic geometry provides a powerful mathematical method for routing in massive constellations. The coverage probability of LEO satellite constellation and two-dimensional plane routing have been studied based on stochastic geometry. Based on these studies, we propose an algorithm to solve the routing problem of a dynamic constellation. At the end of this section, the contributions of this paper are described in more detail.

\subsection{Related Work}
Most of the existing LEO satellite routing is based on the store-and-forward mechanism \cite{lu2015complexity}, \cite{lu2019some}, which undoubtedly brings considerable delay. The following algorithms can achieve real-time communication in specific scenarios \cite{li2019temporal}, \cite{wang2019adaptive}, \cite{pan2019opspf}. In \cite{wang2019adaptive}, medium orbit satellites and high orbit satellites are used to collect and exchange global information to find a route with minimum latency for low-orbit satellites. However, due to the increased complexity of the algorithm, this method is only suitable for small-scale networks but not a massive dynamic network. In \cite{pan2019opspf}, the latency is effectively reduced according to the regular motion of the satellite, and there is no need to collect global information and pay the great computational cost. However, the algorithm is only suitable for a specific small network composed of 8 satellites, and the algorithm cannot optimize link latency. Compared to \cite{pan2019opspf}, the algorithm in \cite{li2019temporal} is local optimum and scalable. By dividing the sphere into many grids, the satellite is positioned by the grid. However, the algorithm reaches the square complexity and is only suitable for static topology. In addition, from the global point of view, it is difficult to guarantee the lower bound of the algorithm.
\par
For massive dynamic satellite networks, the main reason the existing routing algorithms can not combine low complexity and global optimization is that they are designed for each satellite's specific constellations and specific behavior. As an effective mathematical tool, stochastic geometry is especially suitable for analyzing network topology from system-level \cite{haenggi2012stochastic}. So far, many methods have been developed to analyze LEO satellite systems based on stochastic geometry. Binomial point process (BPP) is used to model a closed-area network with a finite number of satellites in \cite{talgat2020stochastic} and \cite{ok-1}. \cite{ok-1}, \cite{talgat2020nearest} and \cite{Al-1} give different forms of contact distance distribution, respectively, that is, the distribution of the distance between a reference point and the nearest satellite. Contact distance distribution provides an important theoretical basis for the analysis of this article. 
\par
In addition, there are several of two-dimensional planar routing strategies based on stochastic geometry \cite{stamatiou2010delay}, \cite{dhillon2015wireless}, \cite{farooq2015stochastic}.
Among them, \cite{sasaki2017energy} and \cite{routingimportant} provide the concept of a reliable region, which ensures the routing can always follow the established direction. The concept of routing efficiency is used to measure the maximum gap between the proposed routing strategy and the optimal one \cite{routingimportant}. By sacrificing the optimality of the algorithm, a sub-optimal routing strategy is given on the premise that only local information is available \cite{richter2018optimal}. According to this idea, the optimal routing is derived in an ideal scenario. Then the sub-optimal routing strategy is proposed when only local information is available.

\subsection{Contribution}
So far, this is the first study of satellite routing based on stochastic geometry. The contributions can be summarized as follows:
\begin{itemize}
    \item Three propositions are given in the ideal scenario where there are available satellites at any location. Based on these propositions, we provide a solution for the ideal scenario and use it as an upper bound for the proposed algorithm.
    \item Equal interval, minimum deflection angle, and maximum step size relay strategies are derivatives of propositions in the ideal scenario. We obtain the proposed algorithm by improving the equal-interval relay strategy. The remaining two are used as the baselines.
    \item We provide two approximations to estimate the gap between the algorithm and the best possible solution. Numerical results show that these two approximations can give tight upper and lower bounds for the algorithm delay.
    \item According to three deterministic LEO satellite constellations, algorithm complexity, average and maximum search area required for finding at least one satellite are analyzed.
    \item We study the influence of parameters such as communication distance, constellation height, and the number of satellites on latency.
\end{itemize}

\begin{table*}[]
\centering
\caption{Summary of Notations.}
\begin{tabular}{|M{2.8cm} | M{12cm}|}
\hline
\textbf{Notation}     & \textbf{Description}      \\ \hline  \hline
$N_{\rm{Sat}}$; $n$; $ N_{\min}$ &
  Number of satellites; number of hops that one link contains; minimum number of hops \\ \hline
${r_\oplus}$; $r_{\rm{Sat}}$; $r$    & Radius of the Earth; height of the satellite orbits; radius of the sphere where satellites locate  \\ \hline
$\mathcal{H}$; $h_i$; $d_i$   & The set which contains the IDs of a link; ID of the $i^{th}$ satellite; distance of the $i^{th}$ hop  \\ \hline
$x_i$, $\theta_i$, $\varphi_i$; $\Phi$  & The location, polar angle, azimuth angle of the $i^{th}$ satellite; the homogeneous BPP  \\ \hline
$T$; $\varepsilon$  & Latency of the multi-hop link;  link tolerable probability of interruption    \\ \hline
$\theta_i^h$; $\theta_{_{02n}}^h$ &
Dome angle between satellites $x_{h_{i-1}}$ and $x_{h_i}$; starting satellite $x_{h_0}$ and ending satellite $x_{h_n}$ \\ \hline
$d_{\max}$; ${\theta_{\max}} $    & Maximum communication distance; upper bound of dome angle between satellites \\ \hline
$\theta_0$; $\theta_r$ & Contact angle; reliable angle      \\ \hline
$\widetilde{E}_1$; $\widetilde{E}_2$ & Contour integral approximation of the efficiency; binomial approximation of the efficiency    \\ \hline
\end{tabular} 
\end{table*}

\section{Optimal Routing Scheme}
Let us consider a scenario where two satellites are too far apart to communicate directly. Several satellites act as relays to complete multi-hop satellite to satellite link communication.

\subsection{Problem Formulation}
To formalize the problem, this section introduces ({\romannumeral1}) satellite distribution, ({\romannumeral2}) link routing model, ({\romannumeral3}) coordinate system and ({\romannumeral4}) optimization problem in order.
\par
Consider a massive constellation composed of $N_{\rm{Sat}}$ satellites, which are independently distributed on a spherical surface according to a homogeneous Binomial Point Process (BPP) \cite{ok-1}. The radius of the sphere is denoted as $r={r_\oplus}+r_{\rm{Sat}}$, where $r_\oplus=6371\rm{ \,Km}$ is the radius of the Earth, and $r_{\rm{Sat}}$ is the height of the satellite orbits. 
\par
The latency required for transmission is often measured in milliseconds, which is much smaller than the orbital period of LEO satellites. The change of satellite position with time in single routing is negligible. A transmission from one satellite to another is called a hop. A link with $n$ hops can be expressed as $\mathcal{H}=\{h_0,h_1,...,h_n\}$. $h_i$ is the ID of the $i^{th}$ satellite, which is a positive integer less than $N_{\rm{Sat}}$. $x_{h_0}$ and $x_{h_n}$ are the positions of the starting point and the ending point, respectively. 
\par
\begin{figure*}[t]
	\centering
	\includegraphics[width=0.98\linewidth]{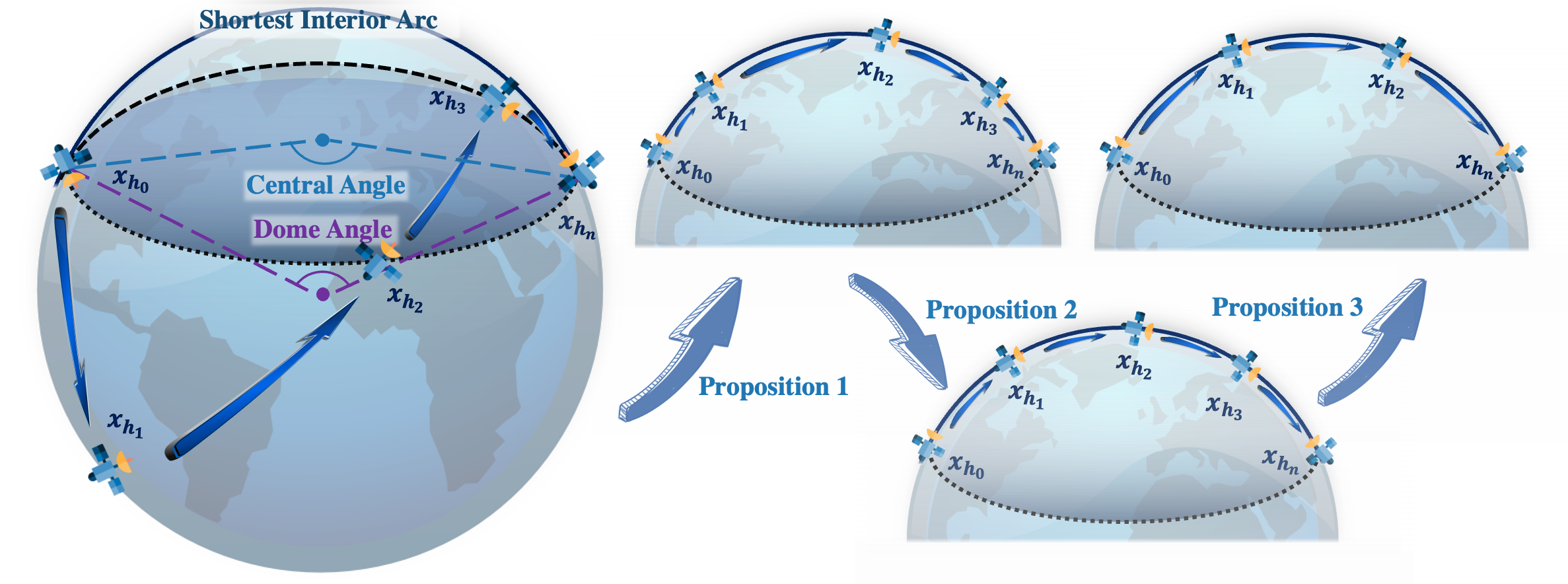}
	\caption{Explanatory figure of the three propositions.}
	\label{fig:System model}
	\vspace{-0.4cm}
\end{figure*}

Since the distribution of satellites forms a homogeneous BPP, the rotation of the coordinate system do not affect the distribution. Set the center of the Earth as the origin. All satellites have the same radial distance $r$. We establish the coordinate system by the coordinates of the starting satellite $x_{h_0}$ and the ending satellite $x_{h_n}$ of the multi-hop link. As is shown in Fig.~\ref{fig:System model}, the $x$-axis is parallel to the line segment between $x_{h_0}$ and $x_{h_n}$, and the $z$-axis is the midperpendicular of this segment, so the $y$-coordinates of $x_{h_0}$ and $x_{h_n}$ are 0. Since satellites are distributed on a sphere, spherical coordinates are more practical than rectangular coordinates. Coordinate $(r,\theta_i,\varphi_i)$ is used to represent the location of $i^{th}$ satellite $x_i$. $\theta_i$ and $\varphi_i$ are the polar and azimuth angles, respectively. Furthermore, the homogeneous BPP is denoted as $\Phi=\{x_1,x_2,...,x_{N_{\rm{Sat}}}\}$. $d_i$ is used to describe the distance of the $i^{th}$ hop, that is, the spatial distance from $x_{h_{i-1}}$ to $x_{h_i}$,
\begin{equation}
\begin{split}
\label{d_i}
    d_i = r \bigg[ 2\Big(1 - \cos{\theta_{h_{i-1}}}\cos{\theta_{h_i}}
    -\sin{\theta_{h_{i-1}}}\sin{\theta_{h_i}}\cos\left(\varphi_{h_{i-1}}-\varphi_{h_i}\right)\Big)\bigg]^{\frac{1}{2}},
\end{split}
\end{equation}
where $i=1,2,...,n$.
\par
To minimize the latency by selecting the number of satellites and their positions, we consider the following optimization problem,
\begin{subequations} 
	\begin{alignat}{2}
		\mathscr{P}_0:\quad &\underset{ {n,\mathcal{H}}}{\text{minimize}}        &\quad& T = \frac{1}{c}\sum_{i=1}^{n} d_i, \label{eq:opt}\\
		&\textrm{subject to:}    &      & d_i\leq 2\sqrt{r^2-r_\oplus^2}, \; \forall i, \label{st:constraint} \\
		&      &      &   d_i\leq d_{\max}, \; \forall i. \label{st:constraint2}
	\end{alignat}
\end{subequations}
\par
In (\ref{eq:opt}), the optimization objective is the latency of the multi-hop link, where $c=3\times10^2 \rm{ \,Km/ms}$ is the speed of laser propagation. Constraint (\ref{st:constraint}) guarantees that the satellites are within line-of-sight of each other \cite{ok-1}, and constraint (\ref{st:constraint2}) limits the maximum communication distance $d_{\max}$ between satellites. Note that We omit the power constraint issues in $\mathscr{P}_0$. Since the objective function is related to the position and number of satellites, the problem is not convex.

\subsection{The Ideal Scenario Solution}\label{The Ideal Scenario Solution}
To make the problem more manageable, we start with an ideal scenario, which assumes satellites are available anywhere on the sphere. Before solving the optimization problem $\mathscr{P}_0$, the following definitions are required.
\begin{definition}[Central Angle]
For a circle passing satellites A and B, the central angle of the circle is the angle between the line connecting A and the center of the circle and the line connecting B and the center of the circle.
\end{definition}
\begin{definition}[Dome Angle]
For a circle centered at the origin, passing satellites A and B, the central angle for this specific circle is called the dome angle.
\end{definition}
\begin{definition}[Shortest Inferior Arc]
The circle centered at the origin, with radius $r$, passing the staring point $x_{h_0}$ and the ending point $x_{h_n}$, are divided into two arcs by $x_{h_0}$ and $x_{h_n}$. The arc with a shorter arc length is called the shortest inferior arc.
\end{definition}
\par
An ideal solution of problem $\mathscr{P}_0$ is derived through the following three propositions.
\par

\begin{proposition}\label{prop1} 
In the ideal scenario, optimal positions $x_{h_i}^*$ in $\mathscr{P}_0$ are located on the shortest inferior arc.
\end{proposition}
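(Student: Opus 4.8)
The plan is to show that \emph{any} feasible multi-hop configuration can be deformed, without increasing the objective $T$ and without violating either constraint, into one whose relay positions all lie on the shortest inferior arc; since the deformation turns out to be strictly improving unless the configuration already has this form, every optimizer must have it. Throughout I fix the number of hops $n$ and optimize over the $n-1$ free relay positions $x_{h_1},\dots,x_{h_{n-1}}$ on the sphere (legitimate in the ideal scenario, where a satellite is available at every point), and the whole argument is uniform in $n$, so it also covers the outer minimization over $n$.

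First I would exploit the coordinate frame already fixed in the paper. Because the $z$-axis is the perpendicular bisector of the segment $x_{h_0}x_{h_n}$ and both endpoints have $y=0$, the endpoints take the form $x_{h_0}=(a,0,z_0)$ and $x_{h_n}=(-a,0,z_0)$ with the \emph{same} height $z_0$, and the great circle through them is exactly the intersection of the sphere with the plane $\{y=0\}$. Put $s=\operatorname{sign}(z_0)$ and define the map sending each point $P=(P_x,P_y,P_z)$ of the sphere to $P^\ast=\big(P_x,\,0,\,s\sqrt{P_y^2+P_z^2}\big)$, i.e.\ rotate $P$ about the $x$-axis into the hemisphere of the great circle that contains the endpoints. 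Writing $\rho_P=\sqrt{P_y^2+P_z^2}$, a one-line Cauchy--Schwarz estimate gives, for any two points, $|P^\ast-Q^\ast|^2=|P-Q|^2-2\big(\rho_P\rho_Q-(P_yQ_y+P_zQ_z)\big)\le|P-Q|^2$, with equality iff $(P_y,P_z)$ and $(Q_y,Q_z)$ point the same way. Applying this map to every relay (the endpoints are fixed, since $\rho$ returns $|z_0|$ and $s|z_0|=z_0$) therefore weakly decreases \emph{every} hop length $d_i$, hence $T$, while preserving both \eqref{st:constraint} and \eqref{st:constraint2} because distances only shrink. Consequently there is an optimizer with all positions on this great circle, all lying moreover in the single hemisphere $\operatorname{sign}(z)=s$.

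Second, once all points lie on one half of the great circle I would pass to the signed angular coordinate $\beta\in[-\pi/2,\pi/2]$ along that half, in which the endpoints sit at $\pm\beta_0$ and the shortest inferior arc is exactly the sub-interval $[-\beta_0,\beta_0]$. Each hop has chord length $2r\sin\!\big(\tfrac12|\beta_i-\beta_{i-1}|\big)$, which is increasing in the angular gap because every gap on a half-circle is at most $\pi$. Replacing each $\beta_i$ by $\operatorname{clip}(\beta_i,-\beta_0,\beta_0)$ is non-expansive in $|\cdot|$, leaves the endpoints $\pm\beta_0$ fixed, and thus again weakly shrinks every hop length while respecting the constraints, placing all relays on the shortest inferior arc. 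Since each of the two reductions is strict unless the configuration already has the claimed form, any optimal $x_{h_i}^\ast$ must lie on the shortest inferior arc.

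The main obstacle is the second step rather than the first: the projection onto the plane is a clean algebraic inequality, whereas arguing that the path never benefits from overshooting the endpoints is clean \emph{only because} the first step has already forced every point into a single hemisphere, where the circle behaves like an interval and the clip is guaranteed non-expansive; on the full circle the nearest-point projection onto an arc is not globally non-expansive, so the order of the two reductions is essential. A secondary point to treat carefully is the degenerate case $z_0=0$ (antipodal endpoints), where the shortest inferior arc is not unique and the statement should be read up to that ambiguity.
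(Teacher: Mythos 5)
Your proof is correct and follows essentially the same route as the paper's: Appendix~\ref{app:prop1} argues, via the correspondence drawn in Fig.~\ref{fig:System model}, that any routing scheme can be mapped relay-by-relay onto the shortest inferior arc without increasing any hop length, which is exactly your two-stage projection (rotation into the plane $y=0$, then clipping along the half-circle), except that the paper leaves the distance-decreasing property to the figure while you verify it algebraically. One small caveat on your final strictness claim (needed to conclude that \emph{every} optimizer lies on the arc): equality in your Cauchy--Schwarz step can also hold along degenerate chains passing through the points $(\pm r,0,0)$, where the radial component $\rho$ vanishes and the parallelism condition becomes vacuous, so strictness requires a separate (easy) argument there; since the paper's own proof establishes only the weak improvement, this does not affect the comparison.
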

\begin{proof}
See Appendix~\ref{app:prop1}.
\end{proof}

Based on proposition~\ref{prop1}, all satellites are assumed to locate on the shortest inferior arc. Therefore, an equivalence problem for $\mathscr{P}_0$ is given by, 
\begin{subequations} 
	\begin{alignat}{2}
		\mathscr{P}_1:\quad &\underset{ {n,\mathcal{H}}}{\text{minimize}}        &\quad& T = \frac{1}{c}\sum_{i=1}^{n} 2r\sin\left(\frac{\theta_i^h}{2}\right),\label{opt2}\\
		&\textrm{subject to:}    &      & \theta_i^h\leq 2\arccos\left(\frac{r_\oplus}{r}\right) \; \forall i, \label{st:constraint2-1}\\
		&                  &      &   \theta_i^h\leq 2\arcsin\left(\frac{d_{\max}}{r}\right), \; \forall i,\label{st:constraint2-2}\\
		&                  &      &   \sum_{i=1}^{n} \theta_i^h = \theta_{_{02n}}^h,\label{st:constraint2-3}
	\end{alignat}
\end{subequations}
where $\theta_i^h$ is the dome angle between satellites $x_{h_{i-1}}$ and $x_{h_i}$. As is shown in Fig.~\ref{fig:angles} $\theta_{_{02n}}^h$ is the dome angle between starting satellite $x_{h_0}$ and ending satellite $x_{h_n}$, which is given as,
\begin{equation}\label{theta_02n}
\begin{split}
    \theta_{_{02n}}^h = \arcsin \bigg(   \frac{\sqrt{2}}{2}\Big(1 - \cos{\theta_{h_0}}\cos{\theta_{h_n}}
    -\sin{\theta_{h_0}}\sin{\theta_{h_n}}\cos\left(\varphi_{h_0}-\varphi_{h_n}\right)\Big)^{\frac{1}{2}} \bigg).
\end{split}
\end{equation}
${\theta_{_{02n}}^h}$ is also defined as the dome angle of the multi-hop link. It can be derived intuitively by the formula (\ref{d_i}) with the aid of simple geometric relations. The following proposition will further give a more specific distribution of relay satellite positions.

\begin{figure}[t]
	\centering
	\includegraphics[width=0.7\linewidth]{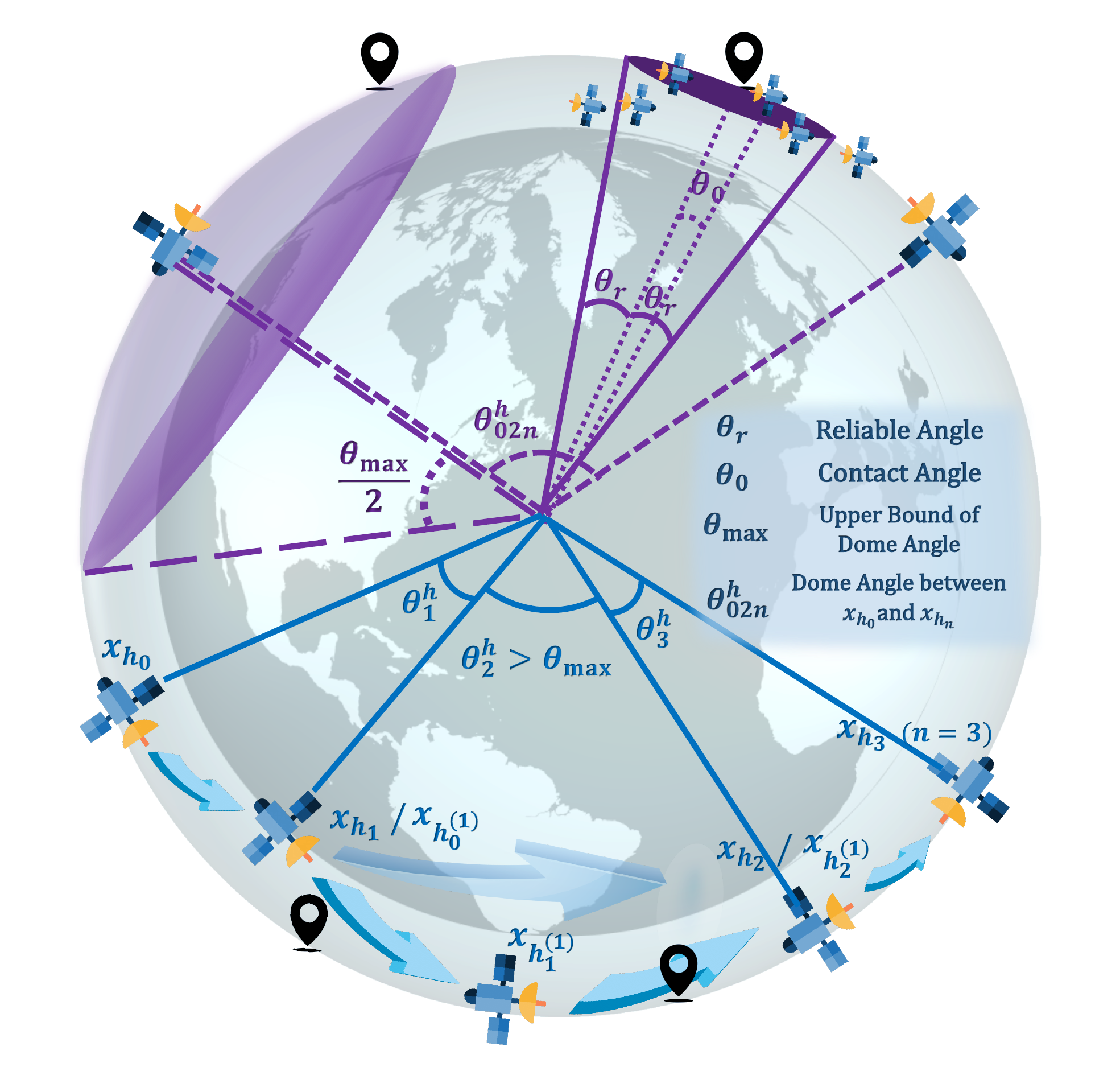}
	\caption{Schematic diagram of equal interval routing strategy.}
	\label{fig:angles}
	\vspace{-0.4cm}
\end{figure}

\begin{proposition}\label{prop2} 
In the ideal scenario, for an $n$-hop link, if the satellites are located on the shortest inferior arc, the optimal dome angle $\theta_i^{h*}$ in $\mathscr{P}_1$ is equal to ${\theta_{_{02n}}^h}/{n}$. 
\end{proposition}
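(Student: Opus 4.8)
The plan is to fix the hop count $n$, exactly as the statement does, and to read $\mathscr{P}_1$ as a one-dimensional resource-allocation problem: the fixed total angle $\theta_{_{02n}}^h$ imposed by \eqref{st:constraint2-3} must be split among the $n$ dome angles $\theta_1^h,\dots,\theta_n^h$, while the objective $T=\frac{2r}{c}\sum_{i=1}^n\sin(\theta_i^h/2)$ is \emph{separable} into $n$ identical one-variable terms. First I would observe that the candidate $\theta_i^h=\theta_{_{02n}}^h/n$ satisfies the two upper bounds \eqref{st:constraint2-1} and \eqref{st:constraint2-2} strictly whenever an $n$-hop route exists at all, so the equal-split candidate is interior to the feasible box and I may drop the inequality constraints from the first-order analysis, keeping only the single linear equality \eqref{st:constraint2-3}.

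Next I would attach a multiplier $\lambda$ to \eqref{st:constraint2-3} and form $\mathcal{L}=\frac{2r}{c}\sum_{i=1}^n\sin(\theta_i^h/2)+\lambda\big(\sum_{i=1}^n\theta_i^h-\theta_{_{02n}}^h\big)$. Because each summand of the objective involves a single variable, the stationarity conditions $\partial\mathcal{L}/\partial\theta_i^h=0$ collapse to $\frac{r}{c}\cos(\theta_i^h/2)=-\lambda$ for every $i$, i.e. all the half-angle cosines take one common value. Every feasible dome angle obeys $\theta_i^h\le 2\arccos(r_\oplus/r)\le\pi$ by \eqref{st:constraint2-1}, so each $\theta_i^h/2$ lies in $[0,\pi/2]$, an interval on which $\cos$ is strictly decreasing and hence injective; the common-cosine condition therefore forces all the $\theta_i^h$ to coincide, and \eqref{st:constraint2-3} pins that common value at $\theta_{_{02n}}^h/n$. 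This isolates the equal split as the unique interior critical point.

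The step I expect to be the main obstacle is promoting this unique critical point to the claimed \emph{minimizer}, since whether the equal split minimizes or maximizes a separable sum of $\sin(\theta_i^h/2)$ under a fixed linear total is decided entirely by the curvature of $\sin(\theta/2)$ on the feasible range — equivalently, it is the direction of Jensen's inequality for this term. Rather than a bordered-Hessian computation I would use a pairwise exchange (smoothing) argument: holding all hops fixed except two, $\theta_a^h$ and $\theta_b^h$ with fixed sum $s=\theta_a^h+\theta_b^h$, the identity $\sin\frac{\theta_a^h}{2}+\sin\frac{\theta_b^h}{2}=2\sin\frac{s}{4}\cos\frac{\theta_a^h-\theta_b^h}{4}$ exposes the pair's contribution as a monotone function of the spread $|\theta_a^h-\theta_b^h|$ alone. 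Determining the sign of this dependence on $[0,2\arccos(r_\oplus/r)]$, and verifying that the balancing exchange can always be carried out without leaving the feasible box, is the delicate part; once that sign is fixed, iterating the exchange drives every pair to equality and yields $\theta_i^{h*}=\theta_{_{02n}}^h/n$, while Proposition~\ref{prop1} guarantees the intermediate relays remain on the shortest inferior arc so that the reduction to $\mathscr{P}_1$ stays valid throughout.
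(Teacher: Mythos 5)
Your first two paragraphs reproduce, in substance, the paper's entire proof (Appendix B): the paper forms the same Lagrangian on \eqref{st:constraint2-3}, obtains $\frac{r}{c}\cos(\theta_i^h/2)+\lambda=0$, and concludes that the common value $\theta_i^{h*}=2\arccos(-\lambda c/r)$ forces all dome angles to equal $\theta_{_{02n}}^h/n$; your remark that $\cos$ is injective on the relevant half-angle range merely makes explicit what the paper leaves implicit. The genuine problem is your third step, which you correctly flagged as the delicate one --- and which, once the sign is actually computed, goes against you. Each feasible $\theta_i^h/2$ lies in $(0,\pi/2)$, so $\sin(\theta^h/2)$ is strictly \emph{concave} on the feasible range. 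In your exchange identity $\sin\frac{\theta_a^h}{2}+\sin\frac{\theta_b^h}{2}=2\sin\frac{s}{4}\cos\frac{\theta_a^h-\theta_b^h}{4}$, the factor $\sin\frac{s}{4}$ is positive and $\cos\frac{\theta_a^h-\theta_b^h}{4}$ is strictly decreasing in the spread $|\theta_a^h-\theta_b^h|$ (its argument stays below $\pi/2$ since $\theta_{\max}<\pi$), so balancing a pair strictly \emph{increases} its contribution. Iterating the exchange therefore shows that the equal split is the \emph{maximizer} of the fixed-$n$ objective \eqref{opt2} on the slice \eqref{st:constraint2-3} --- equivalently, Jensen gives $\sum_i\sin(\theta_i^h/2)\leq n\sin\big(\theta_{_{02n}}^h/(2n)\big)$ with equality only at the equal split. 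The interior critical point you isolated is a constrained maximum, and the minimum of a concave separable objective over the polytope cut out by \eqref{st:constraint2-1}--\eqref{st:constraint2-3} is attained at an extreme point, i.e.\ at allocations saturating hops at $\theta_{\max}$ with a single remainder hop. A two-hop numerical check makes this concrete: with $\theta_{\max}=1$ and total angle $1.5$, the split $(1,0.5)$ gives $\sin 0.5+\sin 0.25\approx 0.7268$, strictly below the equal split's $2\sin 0.375\approx 0.7325$.

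So your plan cannot be completed as written: the smoothing argument needs the opposite curvature, and no reordering of the exchanges rescues it. You should also know that the paper's own proof carries the identical lacuna --- it verifies first-order stationarity only and never checks whether the stationary point is a minimum, which is precisely the promotion step you anticipated would be the obstacle. The honest completion of your argument thus refutes, rather than proves, the fixed-$n$ minimality claim as stated; what survives is the weaker and different statement that, \emph{within the class of equal-interval allocations}, latency is governed by Proposition~\ref{prop3}'s monotonicity in $n$, which is the configuration the rest of the paper (the equal interval strategy and Algorithm~\ref{alg.Nearest Neighbor Search Algorithm}) actually builds on.
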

\begin{proof}
See Appendix~\ref{app:prop2}.
\end{proof}
Proposition~\ref{prop2} decreases delay by the equally spaced distribution of relay satellites, while proposition~\ref{prop3} minimizes the latency by determining the optimal number of satellites. Both propositions are shown in Fig.~\ref{fig:System model}.

\begin{proposition}\label{prop3} 
In an ideal scenario, assume the satellites are equally spaced distributed on the shortest inferior arc, the optimal number of hops is
\begin{equation}\label{N_min}
    N_{\min}=\bigg\lceil \frac{\theta_{_{02n}}^h}{\theta_{\max}} \bigg\rceil + 1,
\end{equation}
where $\lceil . \rceil$ means rounding up to an integer, and 
\begin{equation}\label{theta_max}
    \theta_{\max} = \min\left\{2\arccos\left(\frac{r_\oplus}{r}\right),2\arcsin\left(\frac{d_{\max}}{r}\right)\right\}.
\end{equation}
\begin{proof}
See Appendix~\ref{app:prop3}.
\end{proof}
\end{proposition}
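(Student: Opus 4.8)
The plan is to collapse the discrete optimization over $n$ into a one–dimensional monotonicity question. By Proposition~\ref{prop2}, once the number of hops $n$ is fixed and the relays are taken equally spaced on the shortest inferior arc, every dome angle equals $\theta_{_{02n}}^h/n$, so the objective of $\mathscr{P}_1$ in (\ref{opt2}) reduces to the single-variable function
\begin{equation*}
    T(n)=\frac{2rn}{c}\sin\!\left(\frac{\theta_{_{02n}}^h}{2n}\right),
\end{equation*}
where $\theta_{_{02n}}^h$ is a constant fixed by the endpoints via (\ref{theta_02n}). The whole proof then amounts to minimizing $T(n)$ over the positive integers $n$ that are admissible under the per-hop angular limits.

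First I would merge the two angular constraints (\ref{st:constraint2-1}) and (\ref{st:constraint2-2}): since both must hold, the binding requirement is $\theta_i^h\le\theta_{\max}$ with $\theta_{\max}$ as in (\ref{theta_max}). Substituting the equal-angle value $\theta_i^h=\theta_{_{02n}}^h/n$ turns this into a lower bound on the hop count, $n\ge\theta_{_{02n}}^h/\theta_{\max}$; any larger $n$ stays feasible (it only shrinks the dome angles), so feasibility alone does not pin down a value and the selection must come from the objective.

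The crux is to prove that $T(n)$ is strictly increasing in $n$, so that the latency-optimal choice is the smallest feasible $n$. I would relax $n$ to a continuous variable, set $\phi=\theta_{_{02n}}^h/(2n)$, and differentiate the bracketed factor $g(n)=n\sin(\theta_{_{02n}}^h/(2n))$, obtaining $g'(n)=\sin\phi-\phi\cos\phi$. It then suffices to show $\sin\phi-\phi\cos\phi>0$ on the relevant range: this quantity vanishes at $\phi=0$ and has derivative $\phi\sin\phi>0$, hence is positive for $\phi\in(0,\pi)$, and the range is safe because for every feasible $n$ one has $\phi=\theta_{_{02n}}^h/(2n)\le\theta_{\max}/2\le\arccos(r_\oplus/r)<\pi/2$. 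Geometrically this is simply the fact that the sum of chords spanning a fixed arc increases strictly toward the arc length as the arc is subdivided more finely, i.e.\ fewer hops always yield lower latency.

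Combining the monotonicity with the feasibility bound, the latency-minimizing number of hops is $\lceil\theta_{_{02n}}^h/\theta_{\max}\rceil$; since a link $\mathcal{H}=\{h_0,\dots,h_n\}$ with $n$ hops is described by $n+1$ satellites (the two fixed endpoints plus the relays), the minimal configuration is indexed by $N_{\min}=\lceil\theta_{_{02n}}^h/\theta_{\max}\rceil+1$, which is the claimed expression. I expect the main obstacle to be making the monotonicity argument fully rigorous — in particular verifying that $\phi$ never leaves $(0,\pi/2)$ so that the sign of $\sin\phi-\phi\cos\phi$ is controlled — together with the off-by-one bookkeeping between hop count and satellite count that produces the $+1$.
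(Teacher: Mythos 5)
Your argument is essentially the paper's own proof in Appendix~\ref{app:prop3}: both invoke the equal-spacing result of Proposition~\ref{prop2} to reduce the objective to $T(n)=\frac{2rn}{c}\sin\big(\theta_{_{02n}}^h/(2n)\big)$, relax $n$ to a continuous variable, establish strict monotonicity from $\sin\phi-\phi\cos\phi>0$ on $\phi\in(0,\pi/2)$ (the paper phrases this as $\tan\phi>\phi$, the same fact), and then take the smallest integer with $n\,\theta_{\max}\geq\theta_{_{02n}}^h$. The one caveat is the trailing $+1$: the paper's appendix likewise derives only $n\geq\theta_{_{02n}}^h/\theta_{\max}$, i.e.\ $\lceil\theta_{_{02n}}^h/\theta_{\max}\rceil$ hops, and never justifies the extra term, so your hop-versus-satellite bookkeeping is as defensible as the paper's final step --- though note it sits uneasily with the paper's own use of $N_{\min}$ as a hop count (Theorem~\ref{ideal solution} assigns per-hop dome angle $\theta_{_{02n}}^h/N_{\min}$).
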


In proposition~\ref{prop3}, $\theta_{\max}$ is the upper bound of the dome angle between satellites that have established communication links. $\theta_{\max}$ ensures that two satellites are within the LoS region of each other and the maximum communication distance $d_{\max}$. By combining the above propositions, the global optimum solution to the problem $\mathscr{P}_0$ under ideal conditions is given by the following theorem.

\begin{theorem}\label{ideal solution}
In the ideal scenario, the global optimal multi-hop link in $\mathscr{P}_0$ has $N_{\min}$ hops, and each hop is located on the inferior arc with equal interval distribution, and the dome angle between each hop is ${\theta_{_{02n}}^h}/{N_{\min}}$.
\end{theorem}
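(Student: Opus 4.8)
The plan is to obtain Theorem~\ref{ideal solution} as a direct synthesis of Propositions~\ref{prop1}--\ref{prop3}, organized as an iterated minimization of the objective in $\mathscr{P}_0$. The guiding observation is that the joint minimization over the pair $(n,\mathcal{H})$ factors as
\begin{equation}
\min_{n,\mathcal{H}} T \;=\; \min_{n}\;\Big(\min_{\mathcal{H}\,:\,n\ \text{fixed}} T\Big),
\end{equation}
so it suffices to first solve the inner problem for each admissible hop count $n$ and then optimize the resulting one-parameter value over $n$.

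First I would invoke Proposition~\ref{prop1} to argue that every optimal configuration places all relay points on the shortest inferior arc; this is precisely the reduction that turns $\mathscr{P}_0$ into the equivalent problem $\mathscr{P}_1$, whose decision variables are $n$ together with the dome angles $\theta_1^h,\dots,\theta_n^h$ subject to $\sum_i \theta_i^h=\theta_{_{02n}}^h$ and $\theta_i^h\le\theta_{\max}$. For a \emph{fixed} $n$, Proposition~\ref{prop2} identifies the minimizer of the inner problem as the equal-interval configuration $\theta_i^{h*}=\theta_{_{02n}}^h/n$, which collapses the inner optimization to the explicit one-dimensional profile
\begin{equation}
T^*(n)=\frac{2rn}{c}\,\sin\!\Big(\frac{\theta_{_{02n}}^h}{2n}\Big).
\end{equation}

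It then remains only to minimize $T^*(n)$ over the feasible integers $n$. By Proposition~\ref{prop3} this minimizer is $n=N_{\min}$, and I would record that the equal-interval configuration at $n=N_{\min}$ is admissible, since $\theta_{_{02n}}^h/N_{\min}\le\theta_{\max}$ follows directly from the definition of $N_{\min}$ in \eqref{N_min}--\eqref{theta_max}. Substituting $n=N_{\min}$ back into the equal-interval solution of Proposition~\ref{prop2} yields a global optimum having $N_{\min}$ hops, relay points on the inferior arc, and common per-hop dome angle $\theta_{_{02n}}^h/N_{\min}$, which is exactly the asserted claim.

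The one delicate point, and the step I would treat most carefully, is the legitimacy of the nested minimization: imposing equal spacing \emph{before} optimizing the hop count must not discard the true optimum. This is valid precisely because Proposition~\ref{prop2} holds for \emph{every} admissible $n$, so the inner minimum equals $T^*(n)$ for each $n$ individually, whence $\min_n T^*(n)$ coincides with the unconstrained global minimum over $(n,\mathcal{H})$. I would also confirm that the feasible set of hop counts is nonempty and that the constraints \eqref{st:constraint2-1}--\eqref{st:constraint2-3} are simultaneously satisfied at $n=N_{\min}$, so that the composed configuration is both optimal and feasible, completing the argument.
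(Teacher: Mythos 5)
Your proof is correct and takes essentially the same approach as the paper, which asserts Theorem~\ref{ideal solution} directly ``by combining the above propositions'' --- exactly the synthesis of Propositions~\ref{prop1}--\ref{prop3} you carry out. Your explicit justification of the nested minimization (that Proposition~\ref{prop2} holds for \emph{every} admissible $n$, so the inner minimum equals $T^*(n)=\frac{2rn}{c}\sin\bigl(\theta_{_{02n}}^h/(2n)\bigr)$ hop count by hop count, and feasibility at $n=N_{\min}$ follows from \eqref{N_min}) makes rigorous a step the paper leaves implicit.
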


\subsection{Practical Strategies Discussion}
Although the optimal solution is derived in section~\ref{The Ideal Scenario Solution}, it cannot be applied in practice because an infinite number of satellites is required. Based on propositions 1\,-\,3, we designed three strategies to transition multi-hop routing from the ideal scenario to the situation with limited satellites. Fig.~\ref{fig:Examples of four algorithms} is a top view along the direction of the negative $z$-axis. It gives an example of these strategies. 
\par
In \textbf{minimum deflection angle strategy}, each satellite should look for the satellite with the least deflection from the shortest inferior arc as its next hop. Only satellites satisfying the distance constraints are eligible to be relay satellites. The next-hop satellite also needs to be shorter from the ending satellite than the previous one to ensure that each hop keeps approaching the destination satellite. These requirements also need to be met in the two subsequent strategies. From the algorithm's perspective, since $\theta=0$ for the shortest inferior arc, the strategy finds the satellite with the minimum value of $\theta_i$ that meets the requirements.
\par
\textbf{Equal interval strategy} finds the nearest satellite as the relay in every optimal position obtained under the ideal scenario. As an intuitive extension of the ideal scenario solution, this strategy can bring extremely low latency. The cost of low delay is the poor reliability since it is highly likely that relay satellites do not meet the constraints (\ref{st:constraint}) and (\ref{st:constraint2}).
\par
In \textbf{maximum stepsize strategy}, the satellite chooses the farthest satellite within communication range as its next hop. It reduces the number of hops as much as possible on the premise of ensuring successful communication. In order to avoid the relay satellite being too far away from the shortest inferior arc, we set up a reliable region, which is the dark area in Fig.~\ref{fig:Examples of four algorithms}.

\begin{figure*}[t]
	\centering
	\includegraphics[width=0.9\linewidth]{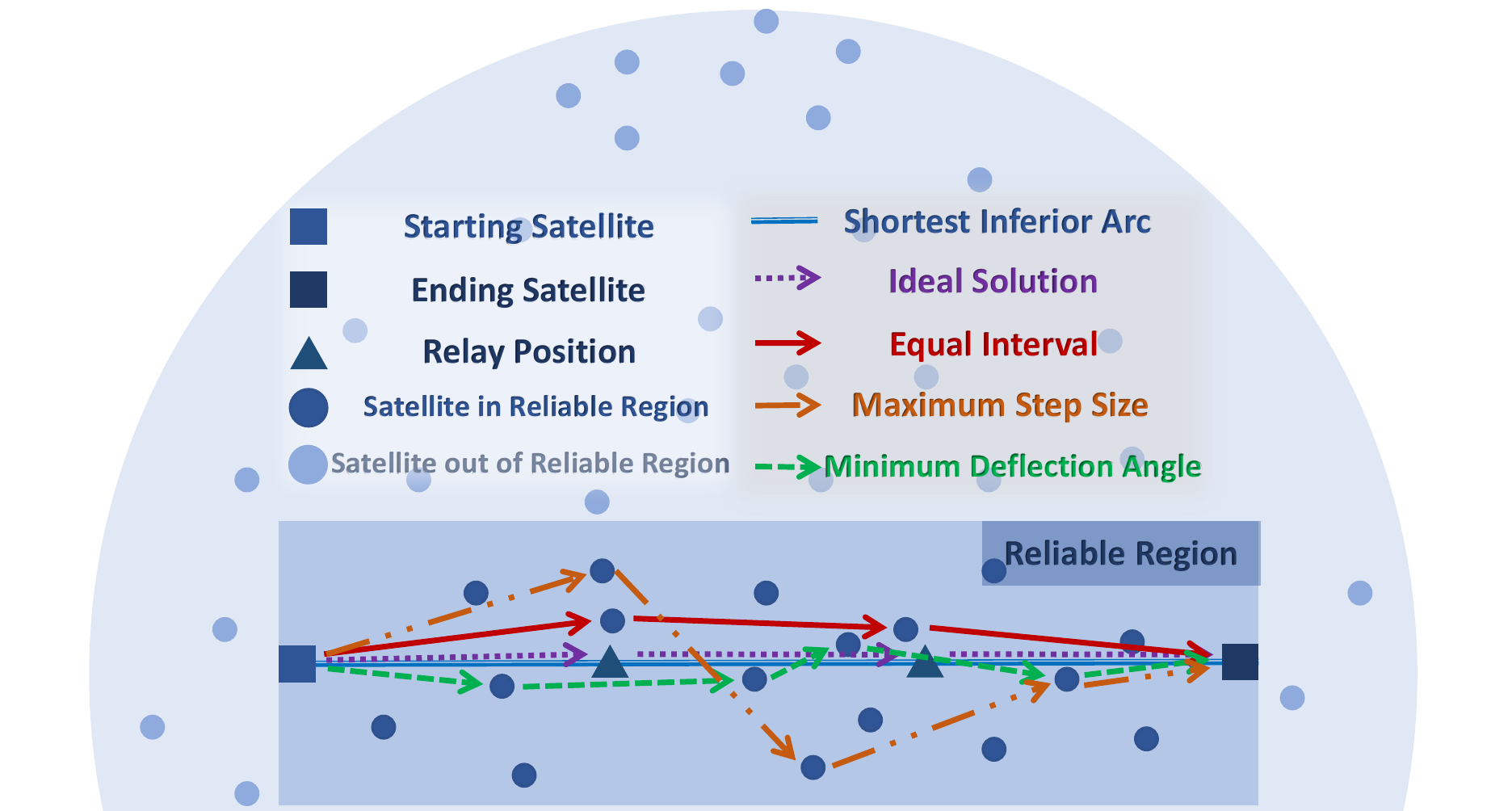}
	\caption{An example of three strategies.}
	\label{fig:Examples of four algorithms}
	\vspace{-0.4cm}
\end{figure*}

As a result, minimum deflection angle strategy and maximum stepsize strategy are set as baselines. The proposed algorithm is designed on the basis of the equal interval strategy, and it is proved to have the lowest latency and high reliability.

\section{Algorithm Design and Performance Analysis}\label{The Practical Solution}
In this section, we first determine the number of hops of the multi-hop link by introducing contact angle and reliable angle. After that, a complete nearest neighbor search algorithm is given, and its reliability is analyzed. Finally, we define link efficiency to measure the maximum gap between algorithm delay and possible optimal solution.

\subsection{Contact Angle and Reliable Angle}
Since the interval ${\theta_{_{02n}}^h}/{n}$ decreases as the number of hops $n$ increases, one way to improve the reliability of equal interval strategy is to increase $n$. However, proposition~\ref{prop3} shows that the latency is also increased with $n$. To choose a proper $n$ which can balance the latency and reliability, the concepts of contact angle $\theta_0$ and reliable angle $\theta_r$ need to be introduced first, which is shown at the top of Fig.~\ref{fig:angles}.
\begin{definition}[Contact Angle]
The contact angle is the dome angle between a randomly placed reference and the closest point from the process (the nearest satellite in this article).
\end{definition}
\par
Since the satellites form a uniform BPP, any randomly selected reference points have the same contact angle distribution.

\begin{lemma}\label{CDF of contact}
The \ac{CDF} of the contact angle is obtained as,
\begin{equation}
    F_{\theta_0}\left(\theta\right) = 1 - \left( \frac{1+\cos \theta}{2} \right) ^ {N_{\rm{Sat}}}, \ 0 \leq \theta \leq \theta_{\max},
\end{equation}
where $\theta_{\max}$ is defined in (\ref{theta_max}).
\begin{proof}
See Appendix~\ref{app:CDF of contact}.
\end{proof}
\end{lemma}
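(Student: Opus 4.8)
The plan is to compute the complementary CDF $\P(\theta_0 > \theta)$ directly and then take the complement. Because $\Phi$ is a homogeneous BPP on the sphere, its law is invariant under rotations, so without loss of generality I place the reference point at the north pole; then the dome angle from the reference to any satellite $x_i$ is simply its polar angle $\theta_i$, and the contact angle is the order statistic $\theta_0 = \min_{1 \le i \le N_{\rm{Sat}}} \theta_i$. The event $\{\theta_0 > \theta\}$ is therefore the event that \emph{every} satellite has polar angle exceeding $\theta$, i.e. that no satellite falls inside the spherical cap of half-angle $\theta$ centered at the reference.

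First I would handle a single satellite. Under the homogeneous BPP each $x_i$ is uniformly distributed on the sphere of radius $r$, so the probability that $x_i$ lies within dome angle $\theta$ equals the ratio of the spherical-cap area to the total surface area. The cap subtending half-angle $\theta$ has area $2\pi r^2(1-\cos\theta)$ while the full sphere has area $4\pi r^2$, giving
\begin{equation}
    \P(\theta_i \le \theta) = \frac{2\pi r^2 (1 - \cos\theta)}{4\pi r^2} = \frac{1-\cos\theta}{2},
\end{equation}
and hence $\P(\theta_i > \theta) = (1+\cos\theta)/2$.

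Next I would invoke independence. The defining feature of the BPP is that the $N_{\rm{Sat}}$ satellite positions are i.i.d.; since $\theta_0 = \min_i \theta_i$, the events $\{\theta_i > \theta\}$ are mutually independent, so
\begin{equation}
    \P(\theta_0 > \theta) = \prod_{i=1}^{N_{\rm{Sat}}} \P(\theta_i > \theta) = \left(\frac{1+\cos\theta}{2}\right)^{N_{\rm{Sat}}}.
\end{equation}
Taking the complement yields $F_{\theta_0}(\theta) = 1 - \left((1+\cos\theta)/2\right)^{N_{\rm{Sat}}}$, which is the claimed expression; the restriction $0 \le \theta \le \theta_{\max}$ merely records that only contact angles up to $\theta_{\max}$ correspond to an admissible communication link.

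There is no genuinely hard step here: the only points that require care are (i) identifying the dome angle with the polar angle after rotating the reference to the pole, which relies on the rotation-invariance of the homogeneous BPP noted in the problem formulation, and (ii) correctly evaluating the spherical-cap area $2\pi r^2(1-\cos\theta)$. Everything else follows from the i.i.d. structure of the BPP and the definition of the contact angle as a minimum of the per-satellite dome angles.
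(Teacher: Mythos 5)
Your proof is correct and follows essentially the same route as the paper's: both reduce $\{\theta_0>\theta\}$ to the void event that no satellite lies in the spherical cap of half-angle $\theta$, use the cap-area formula $2\pi r^2(1-\cos\theta)$ over the total area $4\pi r^2$, and raise the per-satellite probability to the power $N_{\rm{Sat}}$ via the i.i.d.\ structure of the BPP. Your explicit framing of $\theta_0$ as the minimum of i.i.d.\ polar angles after rotating the reference to the pole is just a slightly more detailed rendering of the paper's one-line void-probability step, so there is nothing substantive to add.
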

Based on Lemma~\ref{CDF of contact}, the \ac{PDF} of the contact angle can be obtained by taking the derivative of \ac{CDF} with respect to $\theta$.
\begin{lemma}\label{PDF of contact}
The \ac{PDF} of the contact angle is obtained as,
\begin{equation}
    f_{\theta_0}\left(\theta\right) = \frac{N_{\rm{Sat}}
    }{2} \sin{\theta} \left( \frac{1+\cos \theta}{2} \right) ^ {N_{\rm{Sat}}-1}, \ 0 \leq \theta \leq \theta_{\max},
\end{equation}
where $\theta_{\max}$ is defined in (\ref{theta_max}).
\end{lemma}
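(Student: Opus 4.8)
The plan is to read off the density directly from the distribution function established in Lemma~\ref{CDF of contact}, since the text has already signaled that the \ac{PDF} is the derivative of the \ac{CDF}. On the open interval $(0,\theta_{\max})$ the contact angle is a continuous random variable whose \ac{CDF} is a smooth function of $\theta$, so I would invoke the standard identity $f_{\theta_0}(\theta)=\frac{{\rm d}}{{\rm d}\theta}F_{\theta_0}(\theta)$ and apply it to
\begin{equation}
F_{\theta_0}(\theta)=1-\left(\frac{1+\cos\theta}{2}\right)^{N_{\rm{Sat}}}.
\end{equation}

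First I would differentiate termwise: the constant $1$ vanishes, and the remaining term is handled by the chain rule. The outer power map $u\mapsto u^{N_{\rm{Sat}}}$ contributes the factor $N_{\rm{Sat}}\,u^{N_{\rm{Sat}}-1}$ evaluated at $u=(1+\cos\theta)/2$, while the inner map $\theta\mapsto(1+\cos\theta)/2$ contributes its derivative $-\tfrac{1}{2}\sin\theta$. Carrying through the leading minus sign coming from the $1-(\cdot)$ structure of the \ac{CDF}, the two negative signs cancel, and collecting the surviving factors produces exactly $\frac{N_{\rm{Sat}}}{2}\sin\theta\,\big(\tfrac{1+\cos\theta}{2}\big)^{N_{\rm{Sat}}-1}$, as claimed.

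There is essentially no technical obstacle in this argument; the only point that warrants a moment's care is the sign bookkeeping, namely that the minus from $1-(\cdot)$ exactly offsets the minus arising from $\frac{{\rm d}}{{\rm d}\theta}\cos\theta=-\sin\theta$. As a sanity check I would verify that the result is a legitimate density on the stated support: since $\theta_{\max}\le\pi$ (both $2\arccos(r_\oplus/r)$ and $2\arcsin(d_{\max}/r)$ lie in $[0,\pi]$), we have $\sin\theta\ge 0$ and $(1+\cos\theta)/2\ge 0$ throughout $[0,\theta_{\max}]$, so $f_{\theta_0}\ge 0$ there, consistent with the monotonicity of $F_{\theta_0}$ and the truncation at $\theta_{\max}$ already present in Lemma~\ref{CDF of contact}.
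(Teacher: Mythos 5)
Your proposal is correct and matches the paper's approach exactly: the paper obtains this \ac{PDF} simply by differentiating the \ac{CDF} of Lemma~\ref{CDF of contact} with respect to $\theta$, which is precisely your chain-rule computation. Your additional sign bookkeeping and the nonnegativity check on $[0,\theta_{\max}]$ are sound and only make explicit what the paper leaves implicit.
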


\begin{definition}[Reliable Angle]
Reliable angle is the minimum dome angle that ensures that at least one satellite can be found within a specified range.
\end{definition}

However, even given a large region for search, no satellite may be available because of the randomness. Therefore, we can only guarantee that the probability of not finding any satellite is lower than an acceptable threshold. The value of reliable angle is related to this predefined threshold.

\begin{definition}[Link Tolerable Probability of Interruption]
Link tolerable probability of interruption $\varepsilon$ is the upper bound of the probability that no satellite is available within the reliable angle range in at least one hop.
\end{definition}
The absence of a satellite available within a reliable angle range does not mean that the hop will be interrupted because the interruption also depends on the location of the other relay satellite. Therefore, $\varepsilon$ is not equivalent to the average link interruption probability but an upper bound. In addition, $\varepsilon$ can be regarded as a system parameter determined by the requirements rather than an optimization variable. For a fixed $\varepsilon$, the more hops the link has, the higher the reliability required for a single hop. Therefore, reliable angle $\theta_r$ is a monotonically increasing function of $n$. The following lemma will give the relationship among reliable angle $\theta_r$, link tolerable probability of interruption $\varepsilon$, and the number of hops $n$.
\begin{lemma}\label{reliable angle with n}
For an $n$-hop link with link tolerable probability of interruption $\varepsilon$, the reliable angle $\theta_r$ is given by,
\begin{equation}
    \theta_r\left(n\right) = \arccos \left( 2 \left( 1 - \left( 1 - \varepsilon \right)^{\frac{1}{n}} \right)^{\frac{1}{N_{\rm{Sat}}}} - 1 \right).
\end{equation}
\begin{proof}
See Appendix~\ref{app:reliable angle with n}.
\end{proof}
\end{lemma}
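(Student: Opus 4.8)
The plan is to translate the two stochastic-geometry notions---contact angle and link tolerable probability of interruption---into a single algebraic equation in $\theta_r$ and then invert it. First I would invoke Lemma~\ref{CDF of contact}: for a single reference point the probability that at least one satellite lies within dome angle $\theta_r$ is $F_{\theta_0}(\theta_r)$, so the complementary event that a search cap of dome angle $\theta_r$ contains \emph{no} satellite has probability
\begin{equation}
1 - F_{\theta_0}(\theta_r) = \left(\frac{1+\cos\theta_r}{2}\right)^{N_{\rm{Sat}}}. \nonumber
\end{equation}
This is the per-hop failure probability for an arbitrary relay location, which is legitimate because the BPP is homogeneous and, as already noted after Lemma~\ref{CDF of contact}, every reference point shares the same contact-angle law.

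Next I would lift this to the full $n$-hop link. Treating the $n$ search events as independent across hops, the probability that every hop finds at least one admissible relay is $[F_{\theta_0}(\theta_r)]^n$, so the probability that \emph{at least one} hop fails to contain a satellite within its reliable angle is
\begin{equation}
1 - [F_{\theta_0}(\theta_r)]^n = 1 - \left(1 - \left(\frac{1+\cos\theta_r}{2}\right)^{N_{\rm{Sat}}}\right)^n. \nonumber
\end{equation}
By the definition of the reliable angle as the \emph{minimum} dome angle meeting the tolerance, and since the right-hand side is monotonically decreasing in $\theta_r$ (enlarging the cap can only increase the chance of capturing a satellite), the optimal $\theta_r$ is attained with equality against $\varepsilon$. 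I would then solve
\begin{equation}
1 - \left(1 - \left(\frac{1+\cos\theta_r}{2}\right)^{N_{\rm{Sat}}}\right)^n = \varepsilon \nonumber
\end{equation}
by successive inversion: take the $n$-th root to isolate $1 - (1-\varepsilon)^{1/n}$, take the $N_{\rm{Sat}}$-th root to recover $(1+\cos\theta_r)/2$, and apply $\arccos$ to reach the claimed closed form. These last manipulations are routine algebra.

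The main obstacle is conceptual rather than computational: justifying the across-hop independence invoked in the second step. For a binomial point process the total satellite count is fixed, so the occupancy events of the $n$ distinct relay caps are in principle correlated, and the product form $[F_{\theta_0}(\theta_r)]^n$ is therefore an approximation rather than an exact identity. I expect the cleanest resolution is to lean on the paper's own framing: $\varepsilon$ is defined as an \emph{upper bound} on the interruption probability, not as its exact value, so it suffices that the product expression furnish a valid bound under the operating regime (large $N_{\rm{Sat}}$, small caps), where the dependence is negligible. Pinning down that this independence simplification is admissible---and consistent with the upper-bound interpretation of $\varepsilon$---is the step that warrants the most care.
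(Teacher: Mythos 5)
Your proposal is correct and takes essentially the same route as the paper's Appendix~E: your product condition $[F_{\theta_0}(\theta_r)]^n = 1-\varepsilon$ is exactly the paper's per-hop allocation $\varepsilon_1 = 1-(1-\varepsilon)^{1/n}$, after which both arguments invert the contact-angle CDF of Lemma~\ref{CDF of contact} by taking the $n$-th root, the $N_{\rm{Sat}}$-th root, and $\arccos$. The across-hop independence you flag as the delicate point is likewise assumed rather than proved in the paper (which simply invokes independence of the satellites' locations), so your caveat about correlated cap-occupancy events under a fixed-count BPP is a fair remark on the paper's own proof, not a gap in yours.
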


\subsection{Type-\uppercase\expandafter{\romannumeral1} Interruption Analysis}
Through the above analysis, the following results about the number of hops $n$ can be summarized. The latency increases monotonically with the increase of $n$. The relationship between interruption probability and the number of hops is not intuitive. Increasing $n$ requires a lower interruption probability for a single hop but brings a larger area for finding a satellite. If $n$ is too large and the single-hop interval is too small, two relay locations of the one-hop may choose the same satellite, which leads to severe errors. Furthermore, on the premise that the  probability of type-\uppercase\expandafter{\romannumeral2} interruption is lower than $\varepsilon$, the number of hops should be as small as possible. To satisfy the distance constraints, the dome angle of each hop $\theta_i^{h}$ should satisfy,
\begin{equation}
    \theta_i^{h} + 2 \theta_r\left( n \right) \leq \theta_{\max}.
\end{equation}
To ensure that the multi-hop communication can be completed within $n$ hops, we have, 
\begin{equation}
    n \cdot \theta_i^{h} \geq \theta_{_{02n}}^h.
\end{equation}
By combining the above two inequalities, a loose lower bound on $\theta_r\left( n \right)$ can be obtained,
\begin{equation}
    n \left( \theta_{\max} - 2 \theta_r\left( n \right) \right) \geq \theta_{_{02n}}^h,
\end{equation}
To avoid the possibility of selecting the same satellite for two relay positions of a single hop, an upper bound of $\theta_r\left( n \right)$ is given as,
\begin{equation}
    \theta_r\left( n \right) \leq \frac{\theta_{\max}}{2},
\end{equation}
The following algorithm can give the minimum number of hops between the upper and lower bounds through iteration. 
 \begin{algorithm}[!ht] 
	\caption{Iterative Method for Deriving the Number of Hops}
	\label{alg.number of hops} 
	\begin{algorithmic} [1]
		
		\STATE \textbf{Input}: Dome angle $\theta_{_{02n}}^h$, number of satellites $N_{\rm{Sat}}$ and link tolerable probability of interruption $\varepsilon$.
		
		\STATE $n\leftarrow N_{\min}$.
		
		\STATE $\theta_r \leftarrow \arccos \left( 2 \left( 1 - \left( 1 - \varepsilon \right)^{\frac{1}{n}} \right)^{\frac{1}{N_{\rm{Sat}}}} - 1 \right)$.
		
		\WHILE{$\frac{1}{2} \left( \theta_{\max} -  \frac{\theta_{_{02n}}^h}{n} \right) \leq \theta_r \leq \frac{1}{2} \theta_{\max}$}
		\STATE $n \leftarrow n + 1$.
		\STATE $\theta_r \leftarrow \arccos \left( 2 \left( 1 - \left( 1 - \varepsilon \right)^{\frac{1}{n}} \right)^{\frac{1}{N_{\rm{Sat}}}} - 1 \right)$.
		\ENDWHILE
		
		\STATE \textbf{Output}: Minimum number of hops $n$ and reliable angle $\theta_r$.
	\end{algorithmic}
\end{algorithm}	
\par
Note that the minimum number of hops is not related to the positions of satellites and it can be expressed as,
\begin{equation}
\begin{split}
    &\hat{N}_{\min} = \min\Bigg\{ n: \left( \theta_{\max} - 2\theta_r \left(n-1\right) \right)\left( \theta_{\max} - 2\theta_r \left(n\right) \right) < 0 \ \ \mathbf{or} \\
    & \left( \theta_{\max} - \frac{\theta_{_{02n}}^h}{n-1}  - 2\theta_r\left(n-1\right)  \right)\left( \theta_{\max} -  \frac{\theta_{_{02n}}^h}{n}  - 2\theta_r \left(n\right) \right) < 0 \Bigg\},
\end{split}
\end{equation}
which is another representation of step (4) of the algorithm. Both $\theta_r\left(n\right)$ and $\frac{1}{2} \left( \theta_{\max} -  {\theta_{_{02n}}^h}/{n} \right)$ increase with $n$. When the algorithm ends the loop as $\frac{1}{2} \left( \theta_{\max} -  {\theta_{_{02n}}^h}/{n} \right) > \theta_r\left(n\right)$ satisfied, the output $n$ is the required minimum number of hops. Otherwise, when the algorithm ends the loop as $2\theta_r\left( n \right) > \theta_{\max}$, no value of $n$ guarantees tolerable probability of interruption $\varepsilon$. For a constellation with a small number of satellites, it is not realistic to guarantee a low tolerable probability of interruption. Such problems due to poor system design are defined as type-\uppercase\expandafter{\romannumeral1} interruption.
\par
\begin{definition}[Type-\uppercase\expandafter{\romannumeral1} interruption]
Type-\uppercase\expandafter{\romannumeral1} interruption is a qualitative indicator to describe the rationality of multi-hop communication system design.
\end{definition}
\par

In addition to running an algorithm to determine whether the type-\uppercase\expandafter{\romannumeral1} interruption occurred, The proposition also provides a sufficient condition for the type-\uppercase\expandafter{\romannumeral1} interruption not to occur.
\begin{proposition}\label{Minimum Nsat}
If there exists a $\theta_{t}<\frac{1}{2}\theta_{\max}$ let the following inequality satisfied, there must be a routing scheme that makes the probability that no satellite is available within the reliable angle range in at least one hop lower than $\varepsilon$,
\begin{sequation}\label{sufficient condition}
    N_{\rm{Sat}} \geq \frac{1}{\ln \left( \frac{1 + \cos \theta_t}{2} \right)} \ln \left( 1 - \left( 1 - \varepsilon \right)^{1 \big/ \left( \Big\lceil  \frac{\theta_{_{02n}}^h}{\theta_{\max}-2\theta_t} \Big\rceil + 1\right) }  \right),
\end{sequation}
where $\theta_{_{02n}}^h$ and $\theta_{\max}$ are defined in (\ref{theta_02n}) and (\ref{theta_max}) respectively.
\begin{proof}
See Appendix~\ref{app:Minimum Nsat}.
\end{proof}
\end{proposition}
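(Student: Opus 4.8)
The plan is to prove the statement constructively: given a $\theta_t<\tfrac{1}{2}\theta_{\max}$ for which (\ref{sufficient condition}) holds, I will exhibit one explicit number of hops $n_t$ whose equal-interval scheme is simultaneously feasible and below the interruption budget, which by definition means type-\uppercase\expandafter{\romannumeral1} interruption cannot occur. The natural candidate is the exponent already appearing in (\ref{sufficient condition}),
\begin{equation}
    n_t = \left\lceil \frac{\theta_{_{02n}}^h}{\theta_{\max} - 2\theta_t} \right\rceil + 1 .
\end{equation}
Since $\theta_t<\tfrac{1}{2}\theta_{\max}$ forces $\theta_{\max}-2\theta_t>0$, the ceiling is well defined and $n_t > \theta_{_{02n}}^h/(\theta_{\max}-2\theta_t)$, hence $n_t(\theta_{\max}-2\theta_t) > \theta_{_{02n}}^h$, which is the slack I will exploit in the feasibility step.

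First I would show that the hypothesis (\ref{sufficient condition}) is exactly equivalent to $\theta_r(n_t)\le\theta_t$, with $\theta_r(\cdot)$ the reliable angle of Lemma~\ref{reliable angle with n}. Because $\arccos$ is decreasing, $\theta_r(n_t)\le\theta_t$ is equivalent to $2\big(1-(1-\varepsilon)^{1/n_t}\big)^{1/N_{\rm{Sat}}}-1 \ge \cos\theta_t$, i.e.
\begin{equation}
    \Big(1-(1-\varepsilon)^{1/n_t}\Big)^{1/N_{\rm{Sat}}} \ \ge\ \frac{1+\cos\theta_t}{2}.
\end{equation}
Taking logarithms of both sides and multiplying by $N_{\rm{Sat}}$, then dividing by $\ln\big(\tfrac{1+\cos\theta_t}{2}\big)$, reproduces (\ref{sufficient condition}) verbatim. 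The one point demanding care is the sign bookkeeping: both $\tfrac{1+\cos\theta_t}{2}<1$ and $1-(1-\varepsilon)^{1/n_t}<1$, so the two logarithms are negative, and dividing by the negative quantity $\ln\big(\tfrac{1+\cos\theta_t}{2}\big)$ reverses the inequality direction, turning the $\ge$ into the stated lower bound on $N_{\rm{Sat}}$.

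Next I would verify geometric feasibility of the $n_t$-hop equal-interval scheme. Combining $\theta_r(n_t)\le\theta_t$ with $n_t(\theta_{\max}-2\theta_t)>\theta_{_{02n}}^h$ gives
\begin{equation}
    2\theta_r(n_t) + \frac{\theta_{_{02n}}^h}{n_t} \ \le\ 2\theta_t + \big(\theta_{\max}-2\theta_t\big) \ =\ \theta_{\max},
\end{equation}
which is precisely the benign termination condition $\theta_r(n_t)\le\tfrac12\big(\theta_{\max}-\theta_{_{02n}}^h/n_t\big)$ of Algorithm~\ref{alg.number of hops}; moreover $\theta_r(n_t)\le\theta_t<\tfrac12\theta_{\max}$ excludes the degenerate branch in which two relay positions of one hop could select the same satellite. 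Thus $n_t$ is an admissible number of hops, and by the defining property of $\theta_r(n_t)$ in Lemma~\ref{reliable angle with n}, searching a reliable-angle region of $\theta_r(n_t)$ on each of the $n_t$ hops keeps the probability that some hop finds no satellite at or below $\varepsilon$. Since such a feasible scheme exists, type-\uppercase\expandafter{\romannumeral1} interruption does not occur.

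The hard part will not be conceptual but the inversion chain: carefully composing the monotone-yet-sign-changing maps ($\arccos$, the $1/N_{\rm{Sat}}$ power, and the two negative logarithms) so that the equivalence with (\ref{sufficient condition}) is two-directional rather than merely sufficient, and confirming that the ceiling-plus-one choice of $n_t$ indeed leaves strict slack $n_t(\theta_{\max}-2\theta_t)>\theta_{_{02n}}^h$ so that both feasibility inequalities hold with room to spare.
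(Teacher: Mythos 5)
Your proposal is correct and follows essentially the same route as the paper's Appendix F: fix the $n$-independent search radius $\theta_t$, take the hop count $n_t=\big\lceil \theta_{_{02n}}^h/(\theta_{\max}-2\theta_t)\big\rceil+1$, and invert the reliable-angle relation of Lemma~\ref{reliable angle with n} by taking logarithms and dividing by the negative quantity $\ln\big(\frac{1+\cos\theta_t}{2}\big)$, which flips the inequality into (\ref{sufficient condition}). Your explicit verification that $2\theta_r(n_t)+\theta_{_{02n}}^h/n_t\leq\theta_{\max}$ and $\theta_r(n_t)<\frac{1}{2}\theta_{\max}$ (i.e., the benign termination branch of Algorithm~\ref{alg.number of hops}) is a welcome spelling-out of a feasibility step the paper leaves implicit, but it is the same argument in substance.
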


\subsection{Type-\uppercase\expandafter{\romannumeral2} Interruption Analysis and Nearest Neighbor Search Algorithm}
Considering that even if the constellation is suitable for multi-hop transmission, communication interruption may still happen due to the randomness of the satellite position. Such interruptions are defined as type-\uppercase\expandafter{\romannumeral2} interruption.

\begin{definition}[Type-\uppercase\expandafter{\romannumeral2} Interruption]
Type-\uppercase\expandafter{\romannumeral2} interruption is an event that happens when the distance in any hop does not satisfy at least one constraint in $\mathscr{P}_0$.
\end{definition}
\par

Although the two types of interruptions happen for different reasons, the occurrences of these two types of interruptions are not independent. The occurrence of type-\uppercase\expandafter{\romannumeral1} interruption often leads to type-\uppercase\expandafter{\romannumeral2} interruption. Because type-\uppercase\expandafter{\romannumeral2} interruption cannot be avoided by the parametric design of the satellite constellation, so we deal with the interruption after it occurs. Suppose the distance between each satellite is too far to communicate. In that case, the satellite at the starting point of the hop is expected to looking for one or several satellites closest to the shortest inferior arc as relays. As is shown at the bottom of Fig.\ref{fig:angles}, it can be regarded as using minimum deflection angle strategy within a single hop.
\par

As mentioned, the algorithm proposed in this article is based on the equal interval strategy. If two types of interruptions are resolved, the probability of errors occurring in the equal-interval strategy is significantly reduced, thus ensuring low latency and high reliability. The practical nearest neighbor search algorithm is divided into four stages: (\romannumeral1) calculate the minimum number of hops through iteration, (\romannumeral2) find the relay position according to equal interval strategy, (\romannumeral3) find nearest satellite in the neighborhood of the relay position to establish the link, and (\romannumeral4) adopting minimum deflection angle strategy in the single hop when the two satellites of the hop cannot satisfy the distance constraints. The last three steps of the algorithm are as follows.

 \begin{algorithm}[!ht]  
	\caption{Nearest Neighbor Search Algorithm}
	\label{alg.Nearest Neighbor Search Algorithm} 
	\begin{algorithmic} [1]
		
		\STATE \textbf{Input}: Locations of point process $\Phi$, the number of hops $\hat{N}_{\min}$, starting point ID $h_0$ and ending point ID $h_{\hat{N}_{\min}}$.
        
        \STATE \textbf{Initialize}: $T \leftarrow 0$.
        
		\FOR{$i = 1 : \hat{N}_{\min}-1$}
		\STATE $\theta_i^h \leftarrow \theta_{h_0} \Big| \frac{2i}{\hat{N}_{\min}} - 1 \Big|$.
		\IF{$i < \frac{\hat{N}_{\min}}{2}+1$}
		\STATE $\varphi_i^h \leftarrow 0$.
		\ELSE 
		\STATE $\varphi_i^h \leftarrow \pi$.
		\ENDIF
		\STATE ${h_i} \leftarrow \arg {\min\limits _{j}} \; d\big( \theta_i^h,\varphi_i^h,\theta_j,\varphi_j \big)$.
		\ENDFOR
		
		\STATE $\mathcal{H} \leftarrow \{ h_0,h_1,...,h_{\hat{N}_{\min}-1},h_{\hat{N}_{\min}} \}. $
		
		\FOR{$i = 1 : \hat{N}_{\min}$}
		\IF{$d( \theta_{h_{i-1}}, \varphi_{h_{i-1}}, \theta_{h_i}, \varphi_{h_i} ) > d_{\max}$}
		\STATE Use minimum deflection angle strategy to find the relay satellite IDs  $\mathcal{H}^{(i)}\{h_0^{(i)},h_1^{(i)}...,h_n^{(i)}\}$ in $i^{th}$-hop.
		\STATE $T \leftarrow T + \sum_{j=1}^{n}
		d \Big( \theta_{h_{j-1}^{(i)}},\varphi_{h_{j-1}^{(i)}},\theta_{h_{j}^{(i)}},\varphi_{h_{j}^{(i)}} \Big).$
		\ELSE 
		\STATE $T \leftarrow T + d( \theta_{h_{i-1}},\varphi_{h_{i-1}},\theta_{h_i},\varphi_{h_i} ).$
		\ENDIF
		\ENDFOR
		
		\STATE \textbf{Output}: IDs of the multi-hop link $\mathcal{H}$ and Latency $T$.
	\end{algorithmic}
\end{algorithm}	
To simplify the description of the algorithm, the distance between two points is defined as,
\begin{equation}\label{d function}
\begin{split}
    d\left( \theta_1, \varphi_1, \theta_2, \varphi_2 \right) = r \big( 2 ( 1 - \cos{\theta_1}\cos{\theta_2} 
    - \sin{\theta_1}\sin{\theta_2}\cos\left(\varphi_1-\varphi_2\right) )\big)^\frac{1}{2}.
\end{split}
\end{equation}
In addition, the start ID $h_0^{(i)} = h_{i-1}$ and the end ID $h_n^{(i)} = h_i$ in set $\mathcal{H}^{(i)}$. The nearest neighbor search algorithm cannot guarantee finding the optimal solution even when the two types of interruptions do not occur. For example, a link with many hops may meet the distance constraints even after two links are merged. Since the sum of the two sides of the triangle is greater than the third, the combined link has a lower latency. Therefore, it is necessary to analyze the latency performance of the algorithm.

\subsection{Efficiency Analysis}\label{efficiency}
For an optimization problem that is difficult to find the optimal solution, the most concerning issue is the gap between the found solution and the optimal solution. Unfortunately, according to the available data, no algorithm can find the optimal solution to the problem. The latency of the optimal solution in the ideal scenario is an unattainable lower bound. It is also an upper bound of the difference between the proposed method and the optimal solution. Therefore, efficiency is defined to quantify the difference.

\begin{definition}[Efficiency]
Efficiency is the ratio of minimum latency in the ideal scenario to the latency of the proposed method.
\end{definition}

Since satellites are uniformly and independently distributed on the sphere and the intervals of relay positions on multi-hop links are equal, the distance between single hops is independent and identically distributed. Therefore, analyzing the efficiency of multi-hop links can be equivalent to studying that of single-hop. The increase in the distance caused by random distribution can be equivalent to the increase of the dome angle. In other words, it offsets the random distribution of satellites by moving their relay positions. Thus, the following two approximations are given.

\begin{theorem}\label{contour integral approximation}
For an $\hat{N}_{\min}$-hop link with dome angle ${\theta_{_{02n}}^h}$, the contour integral approximation of the efficiency is given as,
\begin{equation}
    \widetilde{E}_1 =  \frac{N_{\min}\cdot\sin\left(\frac{\theta_{_{02n}}^h}{2N_{\min}}\right)} {\hat{N}_{\min}\cdot\sin\left(\frac{\theta_{_{02n}}^h}{2\hat{N}_{\min}}\right)\left( 2\overline{\alpha}\left(\frac{\theta_{_{02n}}^h}{2\hat{N}_{\min}}\right)-1 \right)},
\end{equation}
where $N_{\min}$ is defined in (\ref{N_min}), and $\overline{\alpha}\left(\theta^h\right)$ is defined as,
\begin{equation}\label{overline alpha}
\begin{split}
    \overline{\alpha}\left(\theta^h\right) = \frac{\sqrt{2}}{2\pi} \int_0^{\theta_{\max}} \int_0^{\pi} \frac{f_{\theta_0}\left(\theta\right)}{\sin\left(\frac{\theta^h}{2}\right)} \Big( -\cos(\theta_0)\cos(\theta^h)
    -\sin(\theta_0)\sin(\theta^h)\cos\varphi+1 \Big)^{\frac{1}{2}} \,\mathrm{d}\varphi \,\mathrm{d}\theta.
\end{split}
\end{equation}

\begin{proof}
See Appendix~\ref{app:contour integral approximation}.
\end{proof}
\end{theorem}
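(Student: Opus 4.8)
The plan is to write the efficiency as the ratio of the ideal latency (Theorem~\ref{ideal solution}) to the expected latency of the nearest-neighbor search algorithm, and to reduce the latter to a single-hop computation. The numerator is immediate: by Theorem~\ref{ideal solution} the ideal optimum uses $N_{\min}$ equally spaced hops, each of dome angle $\theta_{_{02n}}^h/N_{\min}$ and chord length $2r\sin(\theta_{_{02n}}^h/(2N_{\min}))$, so the ideal latency equals $\frac{2r}{c}N_{\min}\sin(\theta_{_{02n}}^h/(2N_{\min}))$, which supplies the numerator of $\widetilde{E}_1$ up to the common factor $2r/c$. The whole effort therefore concentrates on the denominator, i.e.\ on approximating the mean length of one hop of the proposed $\hat{N}_{\min}$-hop link.

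First I would exploit the homogeneity of the BPP together with the equal-interval construction (step~(ii) of Algorithm~\ref{alg.Nearest Neighbor Search Algorithm}) to argue that the $\hat{N}_{\min}$ per-hop displacements are independent and identically distributed, so that it suffices to analyze one representative hop and multiply by $\hat{N}_{\min}$. For that hop the relay is the satellite nearest to its ideal equal-interval position, hence it is displaced from that position by the contact angle $\theta_0$, whose density $f_{\theta_0}$ is supplied by Lemma~\ref{PDF of contact}; by the azimuthal symmetry of the BPP the displacement direction $\varphi$ is uniform and independent of $\theta_0$. Writing the chord distance from the fixed reference of the hop (the ideal hop midpoint, which sits at dome angle $\theta_{_{02n}}^h/(2\hat{N}_{\min})$ from the relay) to the displaced satellite via the spherical law of cosines in (\ref{d_i}), and averaging over $\theta_0$ and $\varphi$, reproduces exactly the contour-integral quantity $\overline{\alpha}(\cdot)$ of (\ref{overline alpha}); the $\sin$ normalization there is what forces the undisplaced case to give $\overline{\alpha}=1$.

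It then remains to pass from the single-displacement average $\overline{\alpha}$ to the full hop, whose two endpoints are each displaced. The approximation I would invoke is superposition: relative to the undisplaced baseline (normalized to $1$), each endpoint inflates the hop length by the same amount $\overline{\alpha}-1$, so the expected normalized hop length is $1+2(\overline{\alpha}-1)=2\overline{\alpha}-1$. Multiplying the ideal hop length $2r\sin(\theta_{_{02n}}^h/(2\hat{N}_{\min}))$ by this factor, summing over $\hat{N}_{\min}$ hops, and dividing the numerator by the resulting denominator yields the stated $\widetilde{E}_1$.

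The hard part will be justifying the superposition step and the attendant factor $2\overline{\alpha}-1$: the chord length is a nonlinear function of the two endpoint positions, so treating the two displacements as additive perturbations of the hop length neglects their joint (cross) contribution and the curvature of the distance in the displacements, and it is precisely here that the statement is an approximation rather than an identity. A secondary subtlety is the independent use of the contact-angle law of Lemmas~\ref{CDF of contact}--\ref{PDF of contact} at every relay position, which presumes that the $\hat{N}_{\min}$ nearest-satellite searches operate on effectively disjoint neighborhoods (equivalently, that $N_{\rm{Sat}}$ is large enough that the searches do not interact); I would delineate this as the regime in which the bound is intended to be tight, to be confirmed against the numerical validation promised earlier.
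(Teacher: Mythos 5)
Your proposal follows essentially the same route as the paper's proof in Appendix~\ref{app:contour integral approximation}: reduce to one representative hop via the i.i.d.\ structure of the homogeneous BPP, average the chord length over the contact angle $\theta_0$ (Lemma~\ref{PDF of contact}) and a uniform azimuth --- which is exactly the paper's contour integral around the circle of radius $r\sin\theta_0$ --- to obtain $\overline{\alpha}$, and then combine the two endpoint displacements by the symmetric superposition $1+2\left(\overline{\alpha}-1\right)=2\overline{\alpha}-1$, the very step the paper performs by ``symmetrically making the same change'' at the other relay position. The only cosmetic difference is that the paper anchors the calculation by fixing the second satellite at the far relay position (at dome angle $\theta^h$) rather than at the hop midpoint, and you correctly identify, as the paper implicitly does, that the $2\overline{\alpha}-1$ superposition (which ignores the cross-interaction of the two displacements in the nonlinear chord) is precisely where the result is an approximation rather than an identity.
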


\begin{theorem}
For an $\hat{N}_{\min}$-hop link with dome angle ${\theta_{_{02n}}^h}$, the binomial approximation of the efficiency is given as,
\begin{equation}
    \widetilde{E}_2 =  \frac{N_{\min}\cdot\sin\left(\frac{\theta_{_{02n}}^h}{2N_{\min}}\right)} {\hat{N}_{\min}\cdot\eta\left(\frac{\theta_{_{02n}}^h}{2\hat{N}_{\min}}\right)},
\end{equation}
where $N_{\min}$ is defined in (\ref{N_min}), and $\eta\left(\theta^h\right)$ is defined as,
\begin{equation}
\begin{split}
    \eta\left(\theta^h\right) &= \int_0^{\theta_{\max}} \int_0^{\theta_{\max}} \frac{1}{4} f_{\theta_0}\left(\theta_1\right) f_{\theta_0}\left(\theta_2\right) 
    \Big( \sin(\theta^h-\theta_1-\theta_2)+\sin(\theta^h+\theta_1-\theta_2) \\
    &+\sin(\theta^h-\theta_1+\theta_2)+\sin(\theta^h+\theta_1+\theta_2) \Big)  \,\mathrm{d}\theta_1\,\mathrm{d}\theta_2.
\end{split}
\end{equation}
\begin{proof}
Assuming that the contact angle between the relay position and its nearest satellite is $\theta_0$, the satellites are uniformly distributed on a circle with radius $r\sin\theta_0$. By approximating this distribution as a binomial distribution,  satellites are distributed at the nearest or farthest from the adjacent relay position with equal probability. Take the expectation of contact angles, and the above result can be obtained.
\end{proof}
\end{theorem}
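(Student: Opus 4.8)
The plan is to read the efficiency as the ratio of the deterministic ideal latency to the mean latency of the proposed scheme, $\widetilde{E}_2 \approx T_{\rm ideal}/\E[T_{\rm prop}]$, and to reduce the denominator to a single-hop expectation. By Theorem~\ref{ideal solution} the ideal link uses $N_{\min}$ equal hops of dome angle $\theta_{_{02n}}^h/N_{\min}$, so $T_{\rm ideal}=\tfrac{2r}{c}\,N_{\min}\sin\big(\theta_{_{02n}}^h/(2N_{\min})\big)$, which supplies the numerator once the common factor $2r/c$ is cancelled. For the proposed scheme I would use the stated fact that, since the point process is a homogeneous BPP and the $\hat{N}_{\min}$ relay positions are equally spaced, the per-hop lengths are i.i.d.; hence $\E[T_{\rm prop}]=\tfrac{\hat{N}_{\min}}{c}\,\E[D]$, with $D$ the length of one hop whose relay endpoints are separated by the dome angle $\Theta:=\theta_{_{02n}}^h/\hat{N}_{\min}$. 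Everything then hinges on computing $\E[D]$.

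To evaluate $\E[D]$ I would condition on the two contact angles $\theta_1,\theta_2$ of the endpoints; by Lemma~\ref{PDF of contact} each is drawn from $f_{\theta_0}$. Given $\theta_i$, the nearest satellite is uniform on the small circle of radius $r\sin\theta_i$ about the relay position, and the binomial approximation replaces this uniform azimuth by the two extreme points --- nearest to, and farthest from, the opposite relay --- each with probability $1/2$. A short application of the spherical law of cosines shows that the ``nearest'' choice lowers the dome angle of that endpoint to the opposite relay by $\theta_i$ and the ``farthest'' choice raises it by $\theta_i$. Taking the two endpoints independently yields four equiprobable effective dome angles $\Theta+s_1\theta_1+s_2\theta_2$ with $s_1,s_2\in\{\pm1\}$.

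It then remains to assemble the chord lengths. Each effective dome angle $\psi$ gives a hop of length $2r\sin(\psi/2)$ (the same chord law already used in $\mathscr{P}_1$), so
\[
\E[D]\approx \frac{2r}{4}\sum_{s_1,s_2\in\{\pm1\}}\E\!\left[\sin\frac{\Theta+s_1\theta_1+s_2\theta_2}{2}\right],
\]
and carrying out the expectation against $f_{\theta_0}(\theta_1)\,f_{\theta_0}(\theta_2)$ reproduces the four-sine integrand that defines $\eta$, evaluated at $\theta^h=\theta_{_{02n}}^h/(2\hat{N}_{\min})$. Substituting $\E[D]=2r\,\eta(\theta^h)$ into $T_{\rm ideal}/\E[T_{\rm prop}]$ cancels $2r/c$ and delivers the stated $\widetilde{E}_2$. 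A convenient consistency check is that the four sines collapse by sum-to-product to $4\sin\theta^h\cos\theta_1\cos\theta_2$, so that $\eta(\theta^h)=\sin\theta^h\big(\int_0^{\theta_{\max}}\cos\theta\,f_{\theta_0}(\theta)\,\mathrm{d}\theta\big)^2$; this closed form also makes the comparison with the contour-integral bound $\widetilde{E}_1$ transparent.

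The main obstacle is the spherical-geometry bookkeeping in the middle step: translating a contact-angle deviation on the small circle into the exact change of the dome angle to the opposite relay, and then tracking how that deviation enters the half-angle of the chord law, where factors of two and sign conventions are easy to misplace. A secondary point needing justification is the two independence assumptions on which the closed form rests --- independence of the nearest-satellite selections at the two endpoints of a hop, and the i.i.d. assumption across hops --- both of which hold only approximately for a finite BPP and are precisely what make $\widetilde{E}_2$ an approximation rather than an identity.
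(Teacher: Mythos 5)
Your route is exactly the paper's: its entire proof is the three\-/sentence sketch you reconstructed --- reduce to a single hop via the BPP symmetry and equal spacing of relay positions, condition on the two endpoint contact angles drawn from $f_{\theta_0}$, replace the uniform position on each small circle by its nearest/farthest extreme point with probability $1/2$ each, and average. Your spherical bookkeeping is also sound: the extreme points of the small circle of angular radius $\theta_i$ do lie on the great circle through the two relay positions, so the four equiprobable effective dome angles are indeed $\Theta+s_1\theta_1+s_2\theta_2$ with $\Theta=\theta_{_{02n}}^h/\hat{N}_{\min}$, and your numerator and the cancellation of $2r/c$ match the paper. The independence caveats you flag are treated no more carefully by the paper itself; note in passing that for the mean latency only linearity of expectation and identical per-hop marginals are needed, so the i.i.d.\ assumption is heavier than required.

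The one genuine problem is your final identification with $\eta$. The chord law applied to your effective dome angles gives
\[
\E\left[D\mid\theta_1,\theta_2\right]=\frac{2r}{4}\sum_{s_1,s_2\in\{\pm1\}}\sin\left(\frac{\Theta}{2}+s_1\frac{\theta_1}{2}+s_2\frac{\theta_2}{2}\right)=2r\,\sin\frac{\Theta}{2}\,\cos\frac{\theta_1}{2}\,\cos\frac{\theta_2}{2},
\]
whereas the theorem's $\eta$ evaluated at $\theta^h=\Theta/2$ collapses, by your own sum-to-product check, to $\sin\frac{\Theta}{2}\cos\theta_1\cos\theta_2$ inside the integral: in the printed integrand the contact angles enter the sines with full weight while only the hop angle is halved. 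So your computation does not ``reproduce the four-sine integrand that defines $\eta$''; it produces it with $\theta_1,\theta_2$ replaced by $\theta_1/2,\theta_2/2$, and asserting the match silently absorbs a factor-of-two slip --- precisely the half-angle bookkeeping you warned about in your closing remarks. Geometrically your version is the defensible one, and the same half-angle looseness appears in the companion result for $\widetilde{E}_1$, whose statement evaluates $\overline{\alpha}$ at $\theta_{_{02n}}^h/(2\hat{N}_{\min})$ while the corresponding appendix calls $\theta^h$ the dome angle of the full single hop. Because $\cos\theta_0$ and $\cos(\theta_0/2)$ agree to second order in $\theta_0$ and the contact angles in play are of order $10^{-2}$ radians, the two expressions are numerically indistinguishable, which is why the approximation tests tight either way; but as a proof of the stated formula your last step does not close --- you must either carry the half-angles so as to land exactly on the paper's $\eta$, or state explicitly that your (geometrically exact) binomial computation differs from the printed $\eta$ at second order in the contact angles.
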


\begin{table*}[]
\caption{Reliability analysis of deterministic LEO satellite constellations.}
\renewcommand\arraystretch{1}
 \resizebox{16.2cm}{3.15cm}{
\begin{tabular}{|c|c|c|c|}
\hline

 & Starlink      & OneWeb        & Kuiper          \\ \hline  \hline
Constellation altitude [Km]           & 550           & 1200     & 590,   610, 630 \\ \hline
Number   of (planned) satellites        & 11927         & 650           & 3236            \\ \hline
Expectation of contact angle               &0.0162           & 0.0695          & 0.0312           \\ \hline
Number   of hops                        & 9~/~10          & 69~/~8          & 12~/~13           \\ \hline
Reliable   angle [rad]                  & 0.0386~/~0.0481 & 0.1996~/~0.2026 & $\approx0.0765~/~\approx0.0941$   \\ \hline
Minimum number of satellites required & 710~/~2535     & 2053~/~7889    & $ \approx 777~/~\approx2520 $     \\ \hline
Type-\uppercase\expandafter{\romannumeral1} interruption occurs or not        & No~/~No         & Yes~/~Yes        & No~/~No           \\ \hline
Probability of Type-\uppercase\expandafter{\romannumeral2} interruption occurs &   $<0.01\%~/~<0.01\%$  & $9.41\%~/~100\%$   &  $<0.01\%~/~<0.01\%$   \\ \hline
Efficiency &  $99.44\%~/~99.17\%$  & $97.80\%~/~96.27\%$   &  $97.91\%~/~97.56\%$   \\ \hline
\end{tabular}
    }
\label{table:models}
\end{table*}

\section{Numerical Results}
This section analyzes the performance of the algorithm based on the results of numerical simulation. For the existing deterministic constellations, we analyze the feasibility of the algorithm. Then different approximation methods and routing strategies are compared from the perspective of latency.

\subsection{Reliability Analysis of Deterministic Constellations}
Table~\ref{table:models} shows the simulation results of three deterministic LEO satellite constellations \cite{robert2020small}. Set the maximum distance at which the satellite can maintain stable communication as $d_{max}=3000\rm{ \,Km}$. Within this distance, the satellites in all three constellations are in the LoS region. Suppose two satellites on opposite sides of the earth need to communicate. Since Kuiper's satellites will be distributed in three different altitude orbits, we approximate that all satellites are distributed in the 610 \,Km orbit. For the last five parameters, the left and right sides of the slash correspond to $\varepsilon=0.1/0.01$, respectively. 
\par
Since the latency of satellite communication is usually tens to hundreds of milliseconds, it is necessary to consider the calculation delay of the algorithm and the delay of search. The complexity of iterative method for deriving number of hops is linear. Iterations can end in a finite number of steps, and the number of hops should satisfy,
\begin{equation}\label{max iterations}
    n \leq \frac{\ln\left( 1 - \varepsilon \right)}{\ln \left( 1- \frac{1}{2} \left( 1 - \cos \frac{\theta_{\max}}{2}  \right)^{N_{\rm{Sat}}} \right)}.
\end{equation}
It can be seen that the number of iterations mostly ranges from 1 to 6. The expected contact angle and reliable angle are used to analyze the area of the search region. According to the description of the nearest neighbor search algorithm, traversing all satellites can only stay at the theoretical level. In practice, since satellite systems are massive and moving, it is difficult for a single satellite to get global information. Therefore, it is more meaningful to analyze the required area for finding a satellite than the algorithm complexity. The expectation of contact angle can be derived from the following simple derivation,
\begin{equation}
\begin{split}
\mathbb{E}\left[\theta_0\right] &= \int_0^{\theta_{\max}} 1 - F_{\theta_0}\left(\theta\right)\mathrm{d}\theta =  \int_0^{\theta_{\max}} \left(\frac{1+\cos\theta}{2}\right)^{N_{\rm{Sat}}}
\mathrm{d}\theta\\
& = 2\int_0^{\frac{\theta_{\max}}{2}}\left(\cos\widetilde{\theta} \right)^{2N_{\rm{Sat}}} \mathrm{d}\widetilde{\theta} \overset{(a)}{\approx} \pi \prod \limits_{i=1}^{N_{\rm{Sat}}} \frac{2i-1}{2i},
\end{split}
\end{equation}
where (a) follows Wallis' integrals, since $1-F_{\theta_0}(\theta)$ is very close to 0 when $\theta>\theta_{\max}$ \cite{Al-1}, the result can be approximated by continuation of the domain. Assume the spherical caps with radius of the expected contact angle and reliable angle as the average search area and maximum search area required for finding a satellite. This region is chosen as a spherical cap for computational convenience. Taking Starlink as an example, for a ten-hop link, the average search area is 0.066\% of the entire spherical area. The maximum search area is no more than 0.58\% of the spherical area. The minimum deflection angle strategy needs to search along the belt region near the shortest inferior arc. The maximum step size strategy needs to search in the whole communication range. When the reliable region is not set, the search area of the maximum step size strategy is 10.2\% of the entire spherical area for a ten-hop link. In conclusion, the proposed algorithm can end in a linear number of iterations and generally only takes a few iterations. It requires a tiny search area and has a huge advantage over other strategies. At last, note that shape of the search area is not necessary to a spherical cap, as well as surface of arbitrary shape. Since satellites are uniformly distributed on the sphere, the probability of finding a satellite is a constant for a given surface area for search.
\par
The minimum number of satellites required is obtained by testing several sets of $\theta_t$ according to proposition~\ref{Minimum Nsat} and taking the smallest of them. The probability of type-\uppercase\expandafter{\romannumeral2} interruption is obtained by Monte Carlo method: (\romannumeral1) running the algorithm for $10^6$ rounds, (\romannumeral2) recording the number of interrupt rounds and (\romannumeral3) dividing the number of interrupt rounds by $10^6$ as the probability of interruption. It can be seen that as long as the number of satellites obtained by any set of $\theta_t$ is smaller than the number of satellites in the actual constellation, type-\uppercase\expandafter{\romannumeral2} interruption does not occur. The opposite may not be accurate. For example, in the Kuiper constellation, when $\varepsilon=0.01$ and number of hops is 8, the required number of satellites obtained is 5544, which exceeds the number of satellites of the Kuiper constellation 3236. However, the second type of error still does not occur. 
\par
The last discussion is about type-\uppercase\expandafter{\romannumeral1} interruption. For Oneweb constellation with parameter $\varepsilon=0.1$, we get $n \leq 68.077$ from (\ref{max iterations}) the number of iterations reached 61. When $\varepsilon=0.01$, the iterations do not start because the reliable angle $\theta_0=0.2026$ exceeded half of the maximum dome angle $\frac{\theta_{\max}}{2} = 0.1994$. Both situations lead to the type-\uppercase\expandafter{\romannumeral1} interruption, which further leads to the occurrence of type-\uppercase\expandafter{\romannumeral2} interruption. In addition, the algorithm has high efficiency for all constellations.

\subsection{Comparison of Different Approximations}
As shown in Fig.~\ref{fig:approximation}, the performances of the two estimation methods are compared, and the relationships between latency and constellation parameters are described.
\begin{figure}[t]
	\centering
	\includegraphics[width=0.7\linewidth]{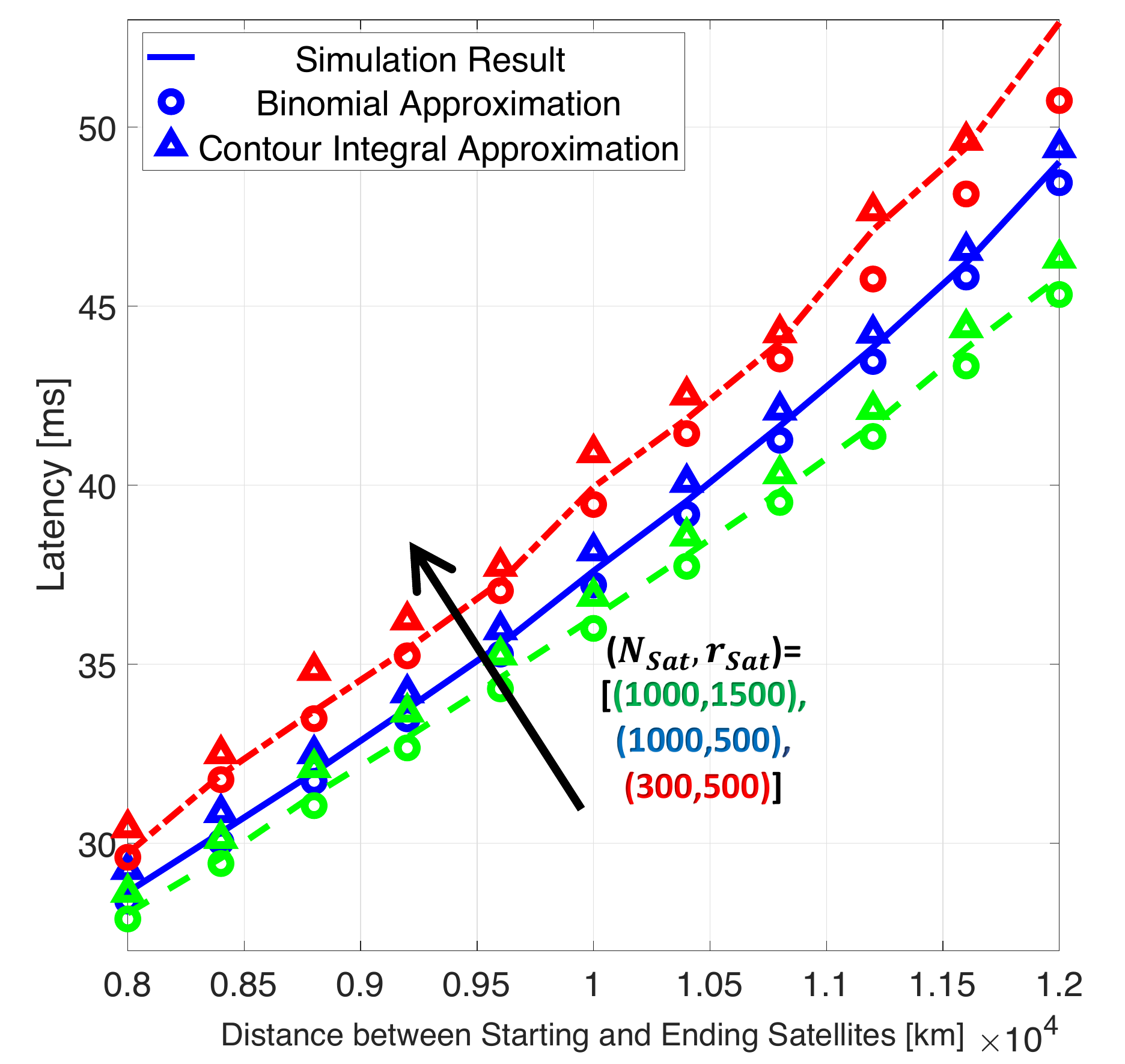}
	\caption{Comparison of different approximations.}
	\label{fig:approximation}
	\vspace{-0.4cm}
\end{figure}
In Fig.~\ref{fig:approximation}, link tolerable probability of interruption $\varepsilon=0.01$, $d_{\max}=3000\rm{\,\,Km}$, the simulation result is the exact latency obtained by Monte Carlo simulation. Both approximation methods are accurate under different constellation altitudes, satellite numbers, and link distances. Under the existing groups of parameters, binomial approximation provides a tight lower bound for the latency. At the same time, the contour integral approximation gives a tight upper bound for the latency. The accuracy of the two approximations is reduced for scenarios where the number of satellites corresponding to the red dot and dash is insufficient. Especially when the distance between the starting satellite and the ending satellite is large, binomial approximation has a relatively large gap with the actual results for the red line. As the communication distance increases and the number of satellites is insufficient, the probability of link interruption increases. In this case, the introduction of the minimum deflection strategy brings larger latency.
\par
Use the solid blue line (1000 satellites and $500\rm{ \,Km}$ constellation altitude) in Fig.~\ref{fig:approximation} as a baseline. When the communication distance is fixed, the latency is negatively correlated with the number of satellites and positively correlated with the constellation height. The decrease in the number of satellites lead to the locations of the found satellite deviating from the ideal optimal relay location, which increases latency. Although the increase of constellation height also causes the satellite location to deviate from the expected position, reducing the shortest inferior arc length has a more significant effect on the latency. A similar view can be found in proposition~\ref{prop1}. In addition, the latency increases almost linearly with the increase of communication distance, and the constellation with larger latency has a larger slope of growth.
\begin{figure}[t]
	\centering
	\includegraphics[width=0.7\linewidth]{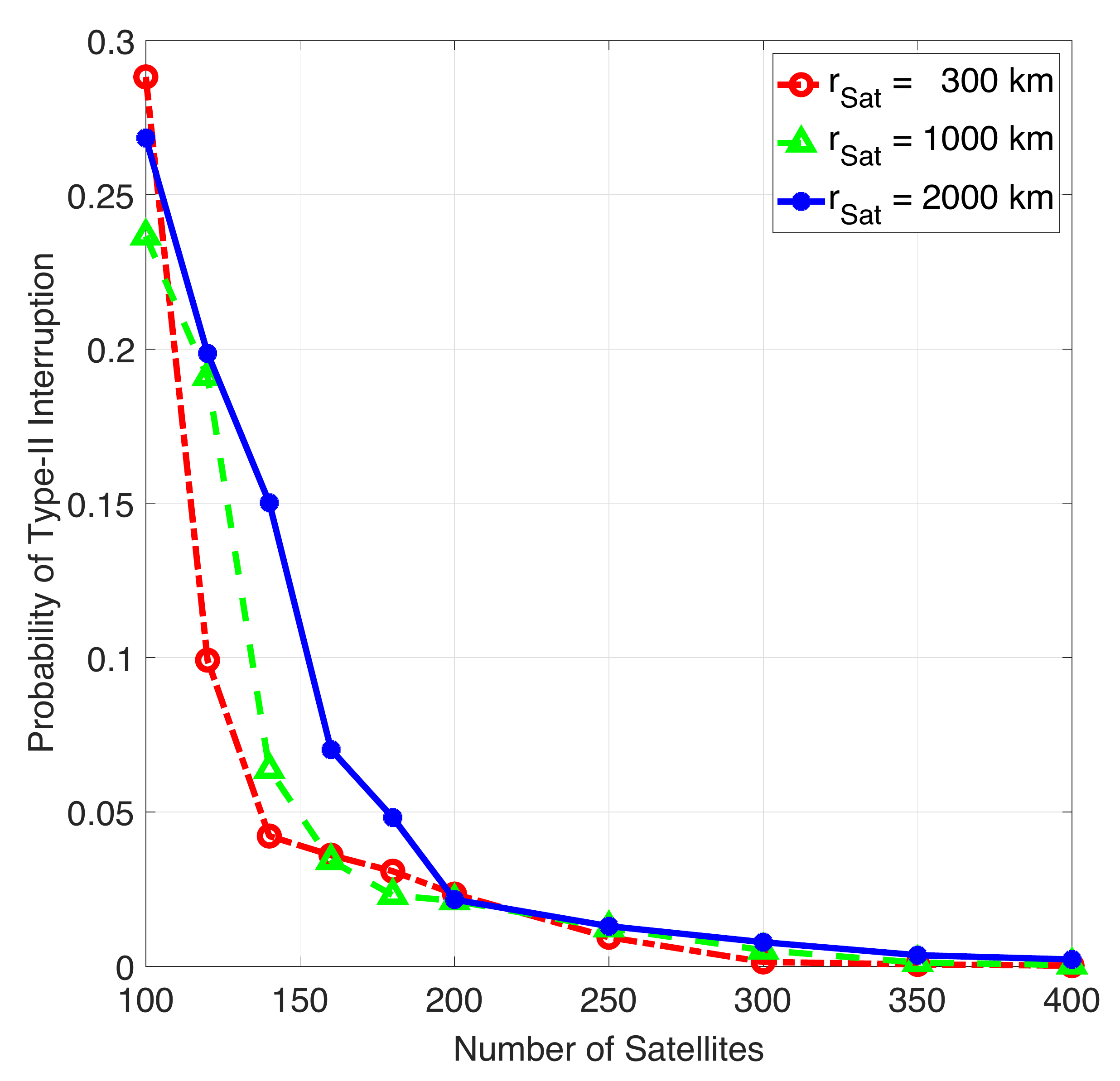}
	\caption{Probability of type-\uppercase\expandafter{\romannumeral2} interruption under different parameters.}
	\label{fig:type 2 interruption}
	\vspace{-0.4cm}
\end{figure}
\par
Fig.~\ref{fig:type 2 interruption} further explains the results in Fig.~\ref{fig:approximation} through numerical results. In Fig.~\ref{fig:type 2 interruption}, the communication distance is $10000\rm{ \,Km}$ and $d_{\max}=3000\rm{ \,Km}$. When  number of satellites $N_{\rm{Sat}}>400$, type-\uppercase\expandafter{\romannumeral2} interruption rarely occurs. When $N_{\rm{Sat}}<200$, the probability of type-\uppercase\expandafter{\romannumeral2} interruption is significantly increased with the decrease of $N_{\rm{Sat}}$ and the increase in constellation height $r_{\rm{Sat}}$. This suggests that when satellites are insufficient, the probability of type-\uppercase\expandafter{\romannumeral2} interruption is closely related to the number of satellites per unit sphere area. Furthermore, the influence of $r_{\rm{Sat}}$ on the probability of type-\uppercase\expandafter{\romannumeral2} interruption is not as significant as $N_{\rm{Sat}}$, especially when $N_{\rm{Sat}}<200$.

\subsection{Comparison of Different Strategies}
Fig.~\ref{fig:800_550} and Fig.~\ref{fig:100_550} provide the results of latency changing with distance between starting and ending satellites for different strategies. In both figures, latitude is fixed as $500\rm{ \,Km}$ and $d_{\max}=3000\rm{ \,Km}$. The number of satellites in Fig.~\ref{fig:800_550} is sufficient (800 satellites) while the number of satellites in Fig.~\ref{fig:100_550} is insufficient (100 satellites). 

\begin{figure}[t]
	\centering
	\includegraphics[width=0.7\linewidth]{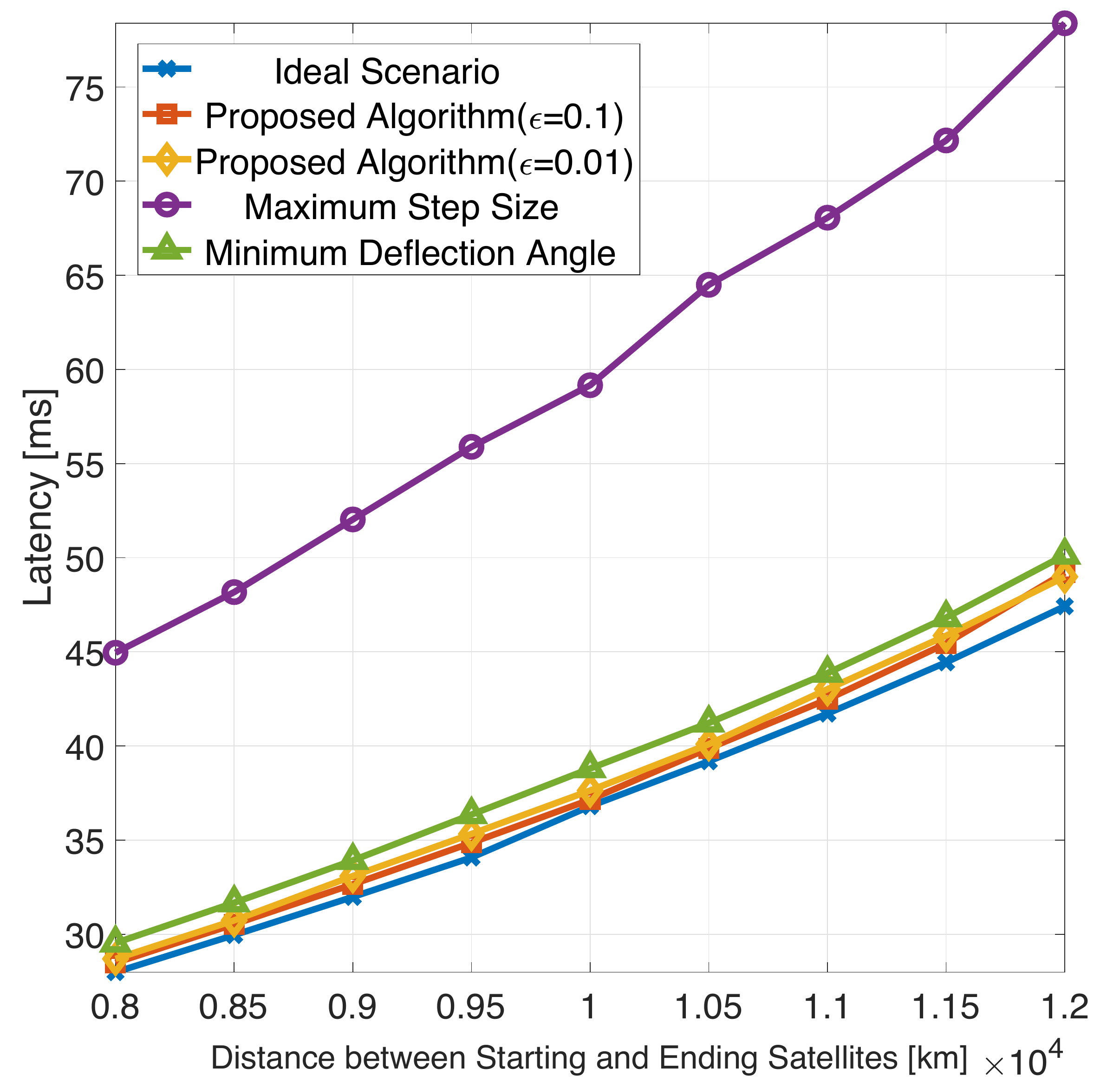}
	\caption{Influence of communication distance on different strategies ($N_{\rm{Sat}}=800$).}
	\label{fig:800_550}
	\vspace{-0.4cm}
\end{figure}

In terms of latency, the optimal scenario, the proposed algorithm, the minimum deflection angle strategy, and the maximum step size strategy are sequentially ranked from small to large. When the number of satellites is sufficient, the latency of the maximum stepsize strategy is much larger than that of other methods. The minimum deflection angle strategy and the proposed algorithm's performances are close to the lower bound. When the number of satellites is insufficient, the proposed algorithm has a remarkable advantage over the minimum deflection angle strategy. For different tolerance rates, $\varepsilon=0.01$ performs better with fewer satellites, while $\varepsilon=0.1$ performs better when satellites are sufficient.

\begin{figure}[t]
	\centering
	\includegraphics[width=0.7\linewidth]{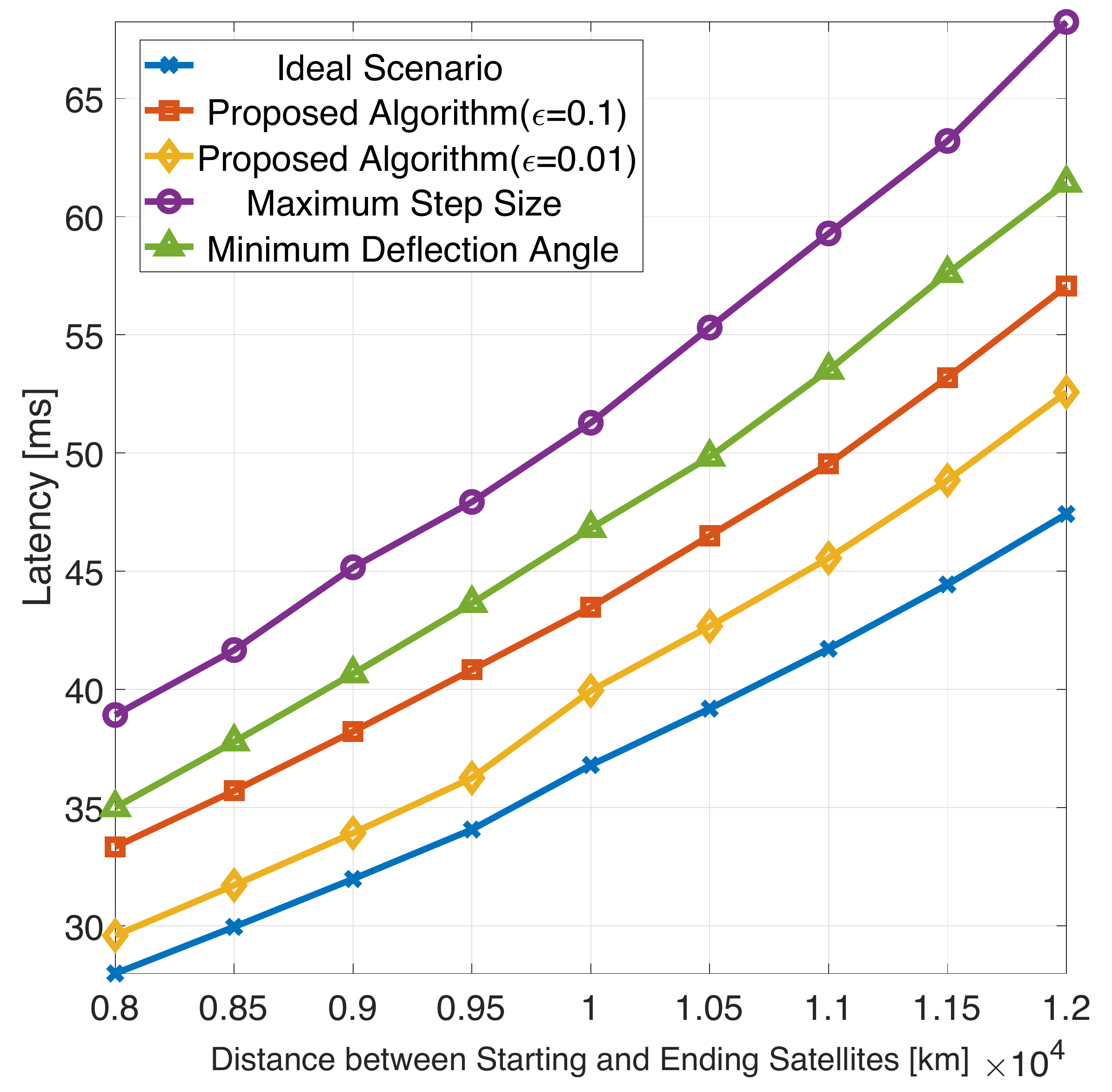}
	\caption{Influence of communication distance on different strategies ($N_{\rm{Sat}}=100$).}
	\label{fig:100_550}
	\vspace{-0.4cm}
\end{figure}

Fig.~\ref{fig:800_10000} considers the scenario where latency varies with constellation height. The number of satellites and is fixed as 800, the communication distance is fixed as $10000\rm{ \,Km}$ and $d_{\max}=3000\rm{ \,Km}$. Overall, the performance of the methods is similar to that in Fig.~\ref{fig:800_550}. The main difference is that for the proposed algorithm and the maximum step size strategy, the latency decreases with the height of the constellation. The change of the minimum deflection angle strategy is not obvious.

\begin{figure}[t]
	\centering
	\includegraphics[width=0.7\linewidth]{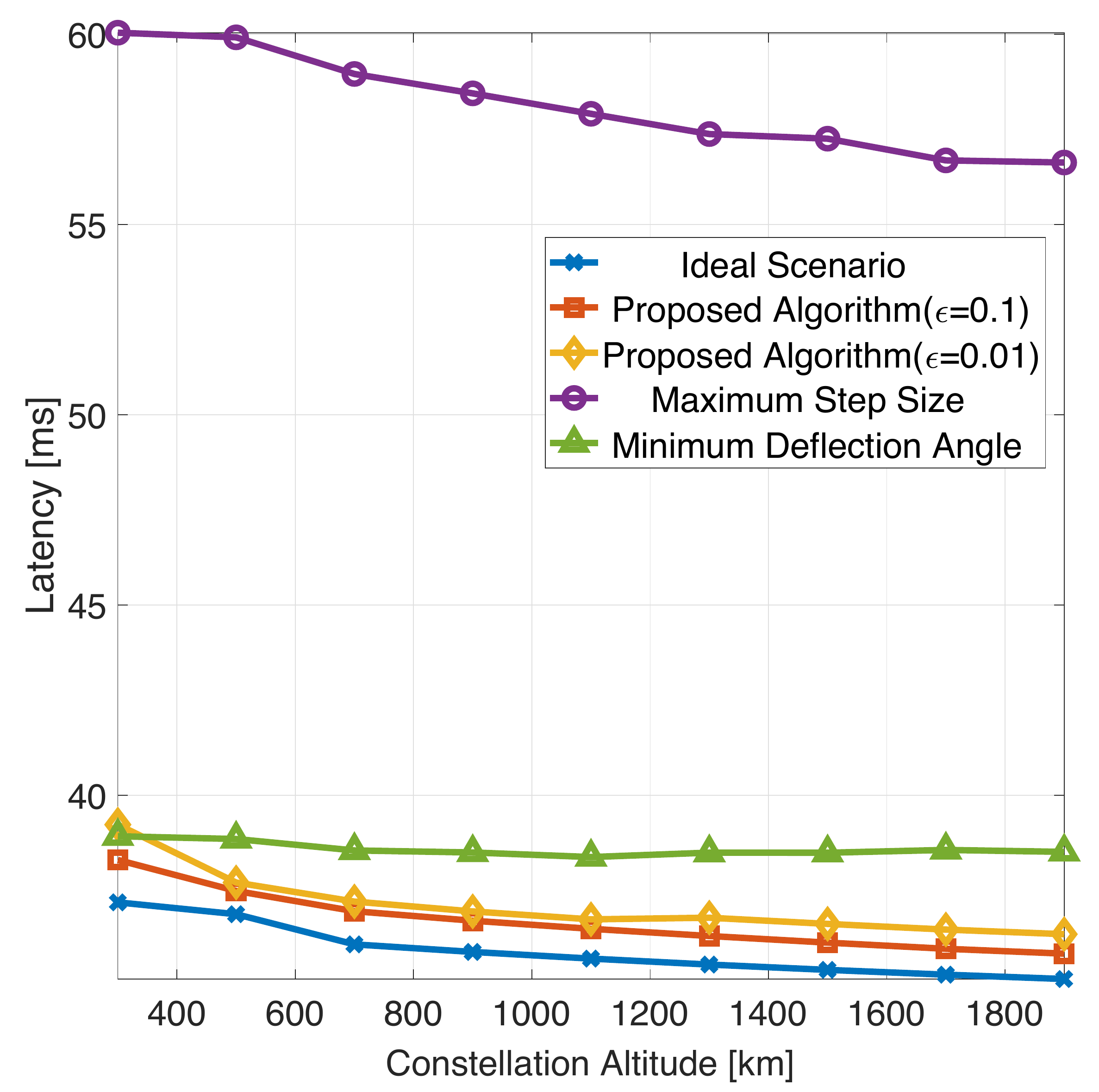}
	\caption{Influence of constellation altitude on different strategies.}
	\label{fig:800_10000}
	\vspace{-0.4cm}
\end{figure}

\section{Further Extensions}
Since the shortest routing problem on a three-dimensional sphere is not an easy problem to deal with, we simplify the model for the convenience of analysis. Although our simple model has limitations when facing some practical issues, the model is fortunately extensible.

\subsection{Expansion to multi-tier networks}
Practically, LEO satellites may assist ground base stations \cite{homssi2021modeling} with global coverage or rely on ground gateways \cite{talgat2020stochastic} to communicate. In addition, satellite systems at different altitudes (including those in synchronous orbits) also interact, such as satellites in the Kuiper constellation at three different altitudes. Therefore, cross-tier communication scenarios should be considered.
\par
Hence, it is required to investigate routing in a spherical heterogeneous network consisting of ground stations, high altitude platforms (HAP), and multi-tier LEO satellites, where satellite communications start and end with ground stations. The theoretical analysis in this paper is basically applicable to the above heterogeneous network, with the following three major changes. Firstly, the values of some parameters such as maximum communication distance $d_{\max}$ vary with different types of the relay device. This means that global information will be harder to obtain and store for ground stations.
\par
Secondly, as an essential parameter in analyzing the efficiency and reliability of the proposed algorithm, the expression and domain of the contact angle in a multi-tier network have minor modifications. Specifically, the contact angle will be replaced by the conditional contact angle, which is the contact angle of satellites distributed within the reliable communication range of both the previous and next hop.
\par
Finally, in the reliability analysis, the tier on which the relay device is located affects the probability of type-\uppercase\expandafter{\romannumeral2} interruption. Therefore, discrete Markov networks, state transition matrices, and absorption states are recommended for reliability analysis. Note that routine starts and ends on the ground stations, thus the first, middle, and last hops of the network need to be designed differently.

\subsection{Latency of Computation and Search}
Only transmission latency is considered as the objective function in this article. Computation and search latency should also be taken into account. As is mentioned, although the proposed algorithm provides a low computational complexity solution for finding the shortest latency routing on the closed sphere, its computational complexity still reaches $\mathcal{O}(N_{\min} \cdot N_{\rm{Sat}})$. The latency corresponding to this computational complexity is still large for a real-time routing with a total transmission latency of tens of milliseconds. The algorithm complexity can be reduced to $\mathcal{O}(N_{\min})$ through any of the following two schemes since only steps (5) - (9) in algorithm 2 need to be executed for both of the schemes. 
\par
When ground stations are available, we can sacrifice storage space on the ground stations for less latency. A specific data structure called Two Line Elements (TLEs) can store the dynamic positions of the satellites, and the IDs of satellites around the target position can be quickly found by index when a routing task arrives. One possible disadvantage of this scheme is that when the source is not the ground equipment but the satellite, the source needs to spend extra latency to communicate with the ground equipment. 
\par
The second scheme applies to scenarios where ground stations are unavailable. The satellite transmits a signal to the target position (obtained in step (4) - (9) of algorithm 2), and the next-hop satellite within the beam  forward this information in the above method and respond to the previous hop. Similarly, this scheme also includes extra search latency related to the contact angle, reliable angle and beamwidth. When the satellite does not receive a response from the next hop, it assumes no satellite in the beam and continues to send messages to surrounding areas. In addition, when several satellites receive the messages from the previous hop and are busy, it requires short-distance communication to schedule a single satellite for routing.

\subsection{Outage Probability and Buffering Latency}
When considering power limits, the probability of interruption and latency are related not only to distance but also to transmission signal power. Under the assumption that regenerative hops are used, a longer single-hop distance and a lower transmission power result in a larger probability of interruption and buffering latency. Under this circumstance, the \ac{SINR} serves as a bridge between them. Since satellites are less dense than ground networks and the beam is highly directional, the interference caused by other satellites can be approximated to a small constant. Assuming that the path loss of single-hop satellite-satellite channel follows the free-space fading model, the average SINR is a decreasing function of to the single-hop distance squared. 
\par
Different from the qualitative analysis before, outage probability can be a quantitative substitute for type-\uppercase\expandafter{\romannumeral2} interruption and single-hop maximum reliable distance $d_{\max}$. 
The outage probability is defined as the probability of receiving \ac{SINR} smaller than a predefined threshold $\mathbb{P}[\rm{SINR}<\gamma]$. The maximum step size proposition may not be optimal because a long single-hop distance may lead to a high probability of communication failure \cite{haenggi2005routing}. Because of the randomness of fading, signal interruptions always occur, and the retransmission mechanism can be introduced \cite{routingimportant}. 
\par
Average achievable rate is regarded as an upper bound on the as the upper bound of the transmission rate and the lower bound of the buffering latency. It is defined as the ergodic capacity from the Shannon-Hartley theorem over a fading communication link \cite{wang2021ultra}, which is proportional to $\log_2\left(1+\rm{SINR}\right)$. When the packet size is much larger than the maximum amount of data transmitted per millisecond under the average achievable rate, buffering latency is necessary to be taken into account. In order to decrease the buffering latency, a large data packet can be divided into parts and transmitted in multiple separate paths. The number of paths is determined by traffic and the average achievable rate of the relay satellites. According to proposition~\ref{prop1}, the path corresponding to the inferior arc with a smaller central angle is selected preferentially.

\subsection{Small Satellite Swarms and Storage-and-Forward Communication}
In the case of an insufficient number of satellites swarms with large packet sizes  \cite{nag2020designing}, the accessibility of data transmission is restricted, and it is challenging to realize real-time communication. These networks are demonstrated as delay/disruption tolerant networks (DTN), in which satellites store information for an amount of time after receiving it \cite{7555254}. The proposed algorithm can be extended to reduce the latency of networks with sufficient interactions. For example, with the accessibility of Earth-to-satellite links, the proposed algorithm applies to Earth observation satellite constellations. 
\par
Furthermore, small satellite swarms can help update the satellite's information (such as positions) around the relay satellite, which is beneficial for the proposed algorithm in this paper that relies on information interaction. The strategy combining the proposed algorithm with store-and-forward communication is also extendable to small spacecraft swarms communicating for interstellar exploration \cite{PARKIN2018370}. 

\section{Conclusion}
The latency minimization of multi-hop satellite links under the maximum distance constraints is studied. We propose a nearest neighbor search algorithm to determine the number of hops of multi-hop links and the position of the relay satellite in each hop. Numerical results show that the algorithm achieves linear complexity and can complete iteration in finite steps. At the same time, the search area required by the algorithm only accounts for a tiny part of the whole sphere area. The latency performance of this algorithm is very close to the minimum latency in the ideal scenario. Take Starlink constellation for example, the algorithm only needs two iterations and searches 0.066\% of the entire spherical area. The extra latency it needs to pay is no more than 1\% of the total latency of the optimal case. Furthermore, two approximations are provided to estimate the maximum gap between the latency of the proposed algorithm and the lower bound of the latency in the ideal scenario. They provide tight upper and lower bounds for latency in most cases. Finally, the influence of system parameters on multi-hop link latency is studied.

\appendices
\section{Proof of Proposition~\ref{prop1}}\label{app:prop1}
Among all circles passing $x_{h_0}$ and $x_{h_n}$ on the sphere where the satellites are located, the circle centered at the origin has the largest radius. Therefore, the shortest inferior arc divided by these two points has the smallest central angle. Based on the fact that the smaller the central angle, the shorter the length of the arc, this inferior arc has the shortest length among all arcs passing through $x_{h_0}$ and $x_{h_n}$. 
\par
For an arbitrary routing scheme, as shown in Fig.~\ref{fig:System model}, we can always locate the corresponding relay satellite on the shortest inferior arc to achieve lower latency. The correspondence of satellite positions between the two schemes is shown in Fig.~\ref{fig:System model}. In the scheme corresponding to the sky blue arrow, the distance of each hop is no longer than that of the scheme corresponding to the green arrow. Note that all subsequent concepts related to the central angle refer to the dome angle unless otherwise stated.

\section{Proof of Proposition~\ref{prop2}}\label{app:prop2}
Use (\ref{opt2}) and (\ref{st:constraint2-3}) to construct the Lagrange function,
\begin{sequation}
    \mathcal{L}\left(\theta_1^h,\theta_2^h,...,\theta_n^h\right)=\frac{1}{c}\sum_{i=1}^{n} 2r\sin\left(\frac{\theta_i^h}{2}\right)+\lambda\left(\sum_{i=1}^{n} \theta_i^h - \theta_{_{02n}}^h\right),
\end{sequation}
take the partial derivative with respect to $\theta_i^h$, we get
\begin{equation}
    \frac{\partial \mathcal{L}\left(\theta_1^h,\theta_2^h,...,\theta_N^h\right)}{\partial \theta_i^h}=\frac{r}{c}\cos\left(\frac{\theta_i^h}{2}\right)+\lambda,
\end{equation}
set the result of the partial derivative to $0$, the optimal $\theta_i^{h*}$ is
\begin{equation}
    \theta_i^{h*} = 2\arccos\left(-\frac{\lambda c}{r}\right),
\end{equation}
which is not related to $i$. Finally, the proof can be completed by combining the constraint (\ref{st:constraint2-3}).

\section{Proof of Proposition~\ref{prop3}}\label{app:prop3}

Assume that the satellites keep equal dome angles on the shortest inferior arc. The latency can be expressed as,
\begin{equation}
    T = \frac{2r}{c}\sum_{i=1}^{n} \sin\left(\frac{\theta_{_{02n}}^h}{2n}\right)=\frac{2rn}{c}\sin\left(\frac{\theta_{_{02n}}^h}{2n}\right),
\end{equation}
take partial derivative with respect to $n$,
\begin{equation}
    \frac{\partial T}{\partial n}=\frac{2r}{c}\sin \left(\frac{\theta_{_{02n}}^h}{2n}\right)-\frac{\theta_{_{02n}}^hr}{cn}\cos \left(\frac{\theta_{_{02n}}^h}{2n}\right),
\end{equation}
since an inferior arc is chosen, $\theta_{_{02n}}^h/\left(2n\right) < \pi/2$, when $n \neq 1$, we have $\cos\left({\theta_{_{02n}}^h}/{2n}\right)>0$, and
\begin{equation}
    \frac{c}{2nr} \frac{\partial T}{\partial n} = \frac{1}{\cos\left({\theta_{_{02n}}^h}/{2n}\right)}\left(\tan\left(\frac{\theta_{_{02n}}^h}{2n}\right) - \frac{\theta_{_{02n}}^h}{2n}\right),
\end{equation}
for the right-hand side of the equation, $\tan\left(\frac{\theta_{_{02n}}^h}{2n}\right)>\frac{\theta_{_{02n}}^h}{2n}$ when $\theta_{_{02n}}^h/\left(2N\right) < \pi/2$. The above analysis shows that $\frac{\partial T}{\partial n}>0$ is always satisfied. As $n$ increases, the latency $T$ increases, so we need to select the minimum number of hops that satisfies the constraints (\ref{st:constraint2-1}) and (\ref{st:constraint2-2}), the upper bound of $\theta_i^{h}$ is limited as $\theta_{\max}$ defined in (\ref{theta_max}), by solving
\begin{equation}
\begin{split}
    \theta_{_{02n}}^h = \sum_{i=1}^{N_{\min}} \theta_i^h \leq N_{\min} \theta_{\max},
\end{split}
\end{equation}
and based on the fact that $N_{\min}$ is an integer, the final result is obtained.

\section{Proof of Lemma~\ref{CDF of contact}}\label{app:CDF of contact}
Start deriving the CDF of the contact angle distribution from the definition,
\begin{equation}
\label{CDF of tagged - 1}
\begin{split}
    F_{\theta_0}\left(\theta\right) & = 1 - {\mathbb{P}}\left[ \theta_0 > \theta \right]  = 1 - \mathbb{P}\left[ {\mathcal{N}\left( \mathcal{A} \right) = 0} \right] \overset{(a)}{=} 1 - \left( 1 - \frac{\mathcal{S}\left({\mathcal{A}}\right)}{4 \pi r^2} \right)^{N_{\rm{Sat}}} \\
    &\overset{(b)}{=} 1 - \left( 1 - \frac{2 \pi r \left( r-r\cos\theta \right)}{4 \pi r^2} \right)^{N_{\rm{Sat}}}  = 1 - \left( \frac{1+\cos \theta}{2} \right) ^ {N_{\rm{Sat}}},
\end{split}
\end{equation}
where $\mathcal{N}\left( \mathcal{A} \right)$ counts the number of the satellites in the spherical cap $\mathcal{A}$ shown in Fig.~\ref{fig:angles}, $\mathcal{S}\left( \mathcal{A} \right)$ is the area measure of spherical cap $\mathcal{A}$. According to step (a), for a homogeneous point process, the probability of having satellites on the spherical cap is equal to the ratio of the area of the spherical cap to the total surface area of the sphere with radius $r$. Step (b) comes from the area formula of a spherical cap, where $r-r\cos\theta$ is the height of the spherical cap. In addition, the domain of $\theta_0$ should meet the constraints. It's easy to verify that for a constellation of hundreds of satellites, $F_{\theta_0}\left(\theta_{\max}\right)$ is very close to 1 \cite{Al-1}.

\section{Proof of Lemma~\ref{reliable angle with n}}\label{app:reliable angle with n}
Since satellites' locations are assumed to be independent, the average probability interruption of each hop should be equal. For an $n$-hop link with tolerable probability of interruption $\varepsilon$, the tolerable probability of interruption of each hop is,
\begin{equation}
    \varepsilon_1 = 1 - \left( 1 - \varepsilon \right)^{\frac{1}{n}}.
\end{equation}
In the spherical cap determined by reliable angle, the probability of having a satellite should be greater than $1-\varepsilon_1$. Since the reliable angle is the minimum angle that satisfies the above constraint, it can be obtained by the definition of the contact angle CDF,
\begin{equation}\label{substitute N_h}
    1 - \left( \frac{1+\cos \theta_r\left(n\right)}{2} \right) ^ {N_{\rm{Sat}}} = \left( 1 - \varepsilon \right)^{\frac{1}{n}},
\end{equation}
transpose and take the square root of $N_{\rm{Sat}}$ times on both sides,
\begin{equation}
    \frac{1+\cos \theta_r\left(n\right)}{2} = \left( 1 - \left( 1 - \varepsilon \right)^{\frac{1}{n}} \right)^{\frac{1}{{N_{\rm{Sat}}}}},
\end{equation}
final conclusion can be reached through simple mathematical operations.

\section{Proof of Proposition~\ref{Minimum Nsat}}\label{app:Minimum Nsat}
Since the reliable angle $\theta_r\left(n\right)$ is related to the number of hops, a $\theta_t$ unrelated to n is taken as the search radius to simplify the relationship. In this case, the minimum number of hops $N_h$ is given as,
\begin{equation}\label{N_h}
    N_t = \bigg\lceil \frac{\theta_{_{02n}}^h}{\theta_{\max}-2\theta_t} \bigg\rceil + 1.
\end{equation}
$2\theta_t < \theta_{\max}$ ensures that $N_t$ is positive. Substitute (\ref{N_h}) into (\ref{substitute N_h}),
\begin{equation}
    \left( \frac{1+\cos \theta_t}{2} \right) ^ {N_{\rm{Sat}}} \leq 1 - \left( 1 - \varepsilon \right)^{1 \big/ \left( \Big\lceil \frac{\theta_{_{02n}}^h}{\theta_{\max}-2\theta_t} \Big\rceil + 1\right) },
\end{equation}
take the logarithm of both sides, and divide by the $\ln \left( \frac{1+\cos \theta_t}{2} \right)$ of both sides to get the result. Note that (\ref{substitute N_h}) guarantees that the $\theta_t$ satisfying (\ref{sufficient condition}) must be greater than or equal to the reliable angle.
A set of practical $\theta_t$ can be taken as,
\begin{equation}
    \bigg\{\frac{1}{2} \left( \theta_{\max} -  \frac{\theta_{_{02n}}^h}{N_{\min}+k} \right),k=0,1,2...\bigg\}.
\end{equation}

\section{Proof of Proposition~\ref{contour integral approximation}}\label{app:contour integral approximation}
Assuming that the contact angles between the two relay positions and their nearest satellites are $\theta_0^{(1)}$ and $\theta_0^{(2)}$, respectively. In this case, these two satellites are uniformly distributed on circles $\mathcal{O}_1(\theta_0^{(1)})$ and $\mathcal{O}_2(\theta_0^{(2)})$ with radius $r\sin\theta_0^{(1)}$ and $r\sin\theta_0^{(2)}$  respectively. The average distance at contact angles $\theta_0^{(1)}$ and $\theta_0^{(2)}$ can be obtained by contour integral around two circles with respect to single-hop distance $d_1$. Therefore, the expectation of single-hop distance $d_1$ can be expressed as,
\begin{equation}
    \mathbb{E}\left[d_1\right] = \mathbb{E}_{\theta_0^{(1)},\theta_0^{(2)}}\left[ \oint_{\mathcal{O}_1(\theta_0^{(1)})} \oint_{\mathcal{O}_2(\theta_0^{(2)})} f_{d_1} \mathrm{d}\mathcal{O}_2\mathrm{d}\mathcal{O}_1\right],
\end{equation}
where $f_{d_1}$ is the PDF of $d_1$, it is related to the contact angles $\theta_0^{(1)}$, $\theta_0^{(2)}$ and the positions on the corresponding $\mathcal{O}_1$, $\mathcal{O}_2$. The expression of $f_{d_1}$ is hard to express in either rectangular or spherical coordinates. Let us split the problem in two. One of the satellites is fixed to the relay position, while the other is uniformly distributed on the circle. The uniform distribution of a satellite can be offset by changing the position of a relay position. This amount of change can be described by $\alpha$. By symmetrically making the same change of the other relay position, the amount of change becomes $2\alpha-1$. 
\par
Since rotation does not affect the distribution of the satellite, let the spherical coordinate of the relay position be $(r,0,0)$. The coordinate of the fixed satellite is $(r,\theta^h,0)$, where $\theta^h$ is the dome angle of the single hop. Assume the contact angles between the relay position and its nearest satellite is $\theta_0$, by equation
\begin{equation}
    \frac{1}{\pi}\int_0^{\theta_{\max}} d(\theta_0,\varphi,\theta_h,0)  \mathrm{d}\varphi = \alpha\left(\theta_0,\theta^h\right) 2r\sin\left(\frac{\theta_h}{2}\right),
\end{equation}
where $d(\theta_0,\varphi,\theta_h,0)$ is defined in (\ref{d function}), the amount of change $\alpha\left(\theta_0,\theta^h\right)$ can be written as,
\begin{equation}
\begin{split}
    \alpha\left(\theta_0,\theta^h\right) = \frac{\sqrt{2}}{2\pi\sin\frac{\theta^h}{2}}  \int_0^{\pi} \sqrt{1-\cos\theta_0\cos\theta_h-\sin\theta_0\sin\theta^h\cos\varphi}\, \mathrm{d}\varphi.
\end{split}
\end{equation}
Note that for a small $\theta_0$,
\begin{equation}
\begin{split}
   \alpha\left(\theta_0,\theta^h\right) \approx \frac{\sqrt{2(1-\cos\theta_h)}}{2\sin{\frac{\theta^h}{2}}} = 1.
   \end{split}
\end{equation}
Take the expectation of $\alpha\left(\theta_0,\theta^h\right)$ with respect to $\theta_0$,
\begin{equation}
    \overline{\alpha}\left(\theta^h\right) = \int_0^{\theta_{\max}}f_{\theta_0}(\theta)\alpha\left(\theta_0,\theta^h\right)\mathrm{d}\theta,
\end{equation}
the result in (\ref{overline alpha}) is derived. Since the propagation speed of the laser is constant, the ratio of latency is equivalent to the ratio of distance, the proof of theorem~\ref{contour integral approximation} is finished.

\bibliographystyle{IEEEtran}
\bibliography{references}

\end{document}